\newcommand{\fade}[1]{\textcolor{gray}{#1}}
\newlength\myindent
\newtheorem{definition}{Definition}
\newtheorem{theorem}{Theorem}
\newcommand*\fedavg{\textsc{FedAvg-SIA}}
\newcommand*\fedsgd{\textsc{FedSGD-SIA}}
\newcommand*\fedmd{\textsc{FedMD-SIA}}
\newcommand{\pie}[1]{%
\begin{tikzpicture}
 \draw (0ex,0ex) circle (1ex);
 \fill (0ex,-1ex) arc (-90:(#1-90):1ex) -- (0ex,-1ex) -- cycle;
\end{tikzpicture}%
}
\begin{document}
\title{Source Inference Attacks: Beyond Membership Inference Attacks in Federated Learning}
\author{\IEEEauthorblockN{Hongsheng Hu\IEEEauthorrefmark{1},
        Xuyun Zhang\IEEEauthorrefmark{2}\thanks{\IEEEauthorrefmark{2}Xuyun Zhang is the corresponding author.},
        Zoran Salcic\IEEEauthorrefmark{1},~\IEEEmembership{Life Senior Member, IEEE,}
        Lichao Sun\IEEEauthorrefmark{3},\\
        Kim-Kwang Raymond Choo\IEEEauthorrefmark{4},~\IEEEmembership{Senior Member, IEEE,}
            and Gillian Dobbie\IEEEauthorrefmark{1}}
            \\
\IEEEauthorblockA{\IEEEauthorrefmark{1}The University of Auckland, New Zealand}

\IEEEauthorblockA{\IEEEauthorrefmark{2}Macquarie University, Australia}

\IEEEauthorblockA{\IEEEauthorrefmark{3}Lehigh University, USA} 

\IEEEauthorblockA{\IEEEauthorrefmark{4}The University of Texas at San Antonio, USA}
}

%
%

\markboth{}%
{}

\IEEEtitleabstractindextext{%
\begin{abstract}
Federated learning (FL) is a popular approach to facilitate privacy-aware machine learning since it allows multiple clients to collaboratively train a global model without granting others access to their private data. It is, however, known that FL can be vulnerable to \textit{membership inference attacks} (MIAs), where the training records of the global model can be distinguished from the testing records. Surprisingly, research focusing on the investigation of the source inference problem appears to be lacking. We also observe that identifying a training record's source client can result in privacy breaches extending beyond MIAs. For example, consider an FL application where multiple hospitals jointly train a COVID-19 diagnosis model, membership inference attackers can identify the medical records that have been used for training, and any additional identification of the source hospital can result the patient from the particular hospital more prone to discrimination. Seeking to contribute to the literature gap, we take the first step to investigate source privacy in FL. Specifically, we propose a new inference attack (hereafter referred to as \textit{source inference attack} -- SIA), designed to facilitate an honest-but-curious server to identify the training record's source client. The proposed SIAs leverage the Bayesian theorem to allow the server to implement the attack in a non-intrusive manner without deviating from the defined FL protocol. We then evaluate SIAs in three different FL frameworks to show that in existing FL frameworks, the clients sharing gradients, model parameters, or predictions on a public dataset will leak such source information to the server. We also conduct extensive experiments on various datasets to investigate the key factors in an SIA. The experimental results validate the efficacy of the proposed SIAs, e.g., an attack success rate of 67.1\% (baseline 10\%) can be achieved when the clients share model parameters with the server. Comprehensive ablation studies demonstrate that the success of an SIA is directly related to the overfitting of the local models.
\end{abstract}

\begin{IEEEkeywords}
Federated Learning, Membership Inference Attacks, Source Inference Attacks, Privacy Leakage.
\end{IEEEkeywords}}

\maketitle

\IEEEdisplaynontitleabstractindextext
%
\IEEEpeerreviewmaketitle

\section{Introduction}\label{sec::01}
Recent deep learning advances have partly contributed to the building of powerful machine learning (ML) models from large datasets. In practice, however, data often resides across different organizational entities (also referred to as data islands). The data records of a single entity, perhaps with the exception of extremely large technology organizations, are generally limited and do not represent the entire data distribution. Thus, stakeholders (e.g., consumers) can generally benefit if different data owners can collaboratively train a joint machine learning model based on the union of different datasets. For example, our society will benefit if different countries and organizations can collaborate to collaboratively train COVID-19 diagnosis models using the broad range of medical data records in these different entities. The exacting privacy regulations (e.g., GDPR \cite{mantelero2013eu} in the European Union and CCPA \cite{de2018guide} in the United States), however, complicate such collaborative efforts. 

Federated learning (FL) is one approach that can be utilized to circumvent the limitations due to data islands, by allowing multiple clients coordinated by a central server to train a joint ML model in an iterative manner~\cite{mcmahan2017communication,smith2017federated,caldas2018leaf,bonawitz2019towards}. In FL, the clients send their model updates to the server but never their raw training dataset, thereby leading to a privacy-aware paradigm for collaborative model training. For the example mentioned above, FL can greatly facilitate the hospitals wishing to train a joint COVID-19 diagnosis model from the distributed data in different hospitals. A real-life case in~\cite{xu2020collaborative} has shown the successful adoption of FL where an ML model for COVID-19 diagnosis has been trained with the usage of the geographically distributed chest computed tomography data collected from different patients at different hospitals.

However, many recent studies~\cite{melis2019exploiting,nasr2019comprehensive,zhu2019deep,wang2019beyond,truex2019hybrid,zhou2022ppa} have shown that FL does not provide sufficient privacy
guarantees, because sensitive information from the training data can still be revealed during the communication process. In FL, the clients transmit necessary information from updates, {e.g.}, gradients, to the central server for global model training. Because the updates are derived from the clients' private training data, there are several recently proposed privacy attacks trying to infer the privacy of the clients from such updates, such as data reconstruction attacks~\cite{hitaj2017deep}, property inference attacks~\cite{ganju2018property}, preference profiling attacks~\cite{zhou2022ppa}, and membership inference attacks (MIAs)~\cite{shokri2017membership}. Among such attacks, MIAs aim to identify whether or not a data record was in the training dataset of a target model ({i.e.,} a member). While an MIA seems like a simple attack, it can impose severe privacy risks in many settings where knowing that someone was in a dataset is harmful~\cite{homer2008resolving}. For example, by identifying the fact that a clinical record was in the training dataset of a medical model associated with a certain disease, MIAs enable an attacker to know that the owner of the clinical record has a high chance of having the disease.

{In FL, the training data of the FL system consist of all the training records in the datasets of the clients because they collaboratively build a global federated model. Thus, MIAs of existing research in the context of FL \cite{melis2019exploiting,nasr2019comprehensive,pustozerova2020information,hu2022membership} are designed to distinguish the training records from the testing records of the global model, i.e., they do not require to identify which client owns a training record, {i.e.,} the source of the training record. However, it is important and practical to explore source privacy in FL beyond membership privacy because the leakage of the source information can further breach privacy. For example, in the previously mentioned FL application where multiple hospitals jointly train a COVID-19 diagnosis model, MIAs can only tell who has had a COVID-19 test, but the further identification of the source hospital of the people could make them more prone to discrimination, especially when the hospital is located in a high-risk region or country~\cite{devakumar2020racism}. Other types of attacks in FL such as property inference attacks also fail to explore the source privacy of clients because they are either designed to infer other types of private information or the private information inferred does not attribute to a specific client. For example, property inference attacks in FL \cite{melis2019exploiting} can infer a certain property in the training data but can not attribute the property to the client who owns this property.}

{This paper proposes a novel inference attack, named \textit{Source Inference Attack} (SIA), that determines which client in FL owns a training record. The SIA can be considered a stronger attack based on the foundation of MIAs, {i.e.}, after identifying which data records are training records via MIAs, the attacker further implements SIAs to determine which client a training record comes from. For practical reasons, we consider the server can be a semi-honest (honest-but-curious) attacker who passively tries to learn the secret based on the information on the identities of clients and communications between them. Specifically, the semi-honest server means that the server tries to infer the private information of the clients without interfering with the federated training, {i.e.,} without deviating from the defined FL protocol, which is the main challenge of SIAs. Note that the attacker can be one of the clients, but we argue that it is impractical in this case for SIAs because the client does not know the identities of the other clients, and it can only access the global model via communications to the server \cite{lyu2022privacy}. }

We leverage the Bayesian theorem to analyze how to effectively implement SIAs in FL. We demonstrate that an honest-but-curious server can estimate the source of a training record in an SIA by leveraging the prediction loss of the local models. More specifically, we theoretically demonstrate that the client with the smallest prediction loss on the training record should have the highest probability of owning it. To demonstrate the feasibility of the proposed SIAs to different FL frameworks, we propose three \textsc{FL-Sia} frameworks that enable the server to conduct the SIAs in three FL frameworks, FedSGD~\cite{mcmahan2017communication}, FedAvg~\cite{mcmahan2017communication}, and FedMD~\cite{li2019fedmd}. The purpose of selecting the three FL frameworks is to show that in existing FL frameworks, the clients sharing gradients, model parameters, or predictions on a public dataset will lead to source information leakage to the server. We conduct extensive experiments on six datasets and different model architectures under various FL settings to evaluate the effectiveness of SIAs. The experiment results validate the efficacy of our proposed SIAs. We conduct a detailed ablation study to investigate how data distributions across the clients and the number of local epochs in FL affect the performance of an SIA. An important finding is that the success of an SIA is directly related to the overfitting of the local models, which is mainly caused by the non-IID data distribution across the clients. 

The main contributions of this paper are three-folds: 
\begin{itemize}
    \item We propose a novel inference attack in federated learning (FL), named as source inference attack (SIA), which infers the source client of a training record. Beyond membership inference attacks, SIAs can further breach the privacy of the training records in FL. 
    \item We innovatively adopt the Bayesian theorem to analyze how an honest-but-curious server can implement SIAs in a non-intrusive manner to infer the source of a training record with the highest probability by using the prediction loss of local models. 
    \item We show the feasibility and effectiveness of SIAs in three FL frameworks, including FedSGD, FedAvg, and FedMD. We perform extensive experiments to empirically evaluate SIAs in the three frameworks with various datasets and under different FL settings. The results validate the efficacy of the proposed SIA. Our proposed SIAs shed new light on how FL reveals sensitive information and the need to build more private FL frameworks. 
\end{itemize}

{This work extends our earlier conference paper \cite{hu2021source}. In Section \ref{sec::02}, we introduce additional background material to give the readers have a broader understanding of the role and impact of SIAs in FL. In Section \ref{sec::03}, we extend the earlier work  in the following ways:}
\begin{itemize}
    \item {We use a threat model to describe the attack goal, target FL systems, attacker, and attack knowledge. }
    \item {We provide systematic theoretical analysis and propose theorems to show how to leverage the prediction loss of the models for conducting SIAs in FL. For the corresponding theorems, we also present detailed corresponding proofs.}
    \item {We show how to implement SIAs in two other commonly-used FL frameworks, FedSGD and FedMD, while in \cite{hu2021source} we only introduced the implementation of SIAs in FedAvg. This extension demonstrates the broader applicability of SIAs in FL.}
\end{itemize}
    
{In Section \ref{sec::04}, we describe our comprehensive experimental evaluation, including the addition in Section \ref{sec::why} to explain why the proposed SIAs can work in FL. In Section \ref{sec::05}, we provide more comprehensive experiments to investigate whether the popular defense method of differential privacy can mitigate SIAs. Moreover, in Section \ref{sec::05}, we discuss the limitations and potential research opportunities of our proposed SIAs. In Section \ref{sec::06}, we also include additional related work to give the readers a broader picture of privacy attacks and defenses in FL. Source code for implementing SIAs in FedSGD, FedAvg, and FedMD are also included\footnote{\url{https://github.com/HongshengHu/SIAs-Beyond_MIAs_in_Federated_Learning}}, while in \cite{hu2021source} we only provide code for SIAs in FedAvg.}

\section{Preliminaries}\label{sec::02}
This section reviews the background of federated learning and membership inference attacks.

\subsection{Federated Learning}\label{sec::pre_fl}
Because data usually exists in the form of isolated islands and central storage is impractical due to privacy regulations and laws, federated learning (FL) has been proposed to allow multiple clients to collaboratively train a machine learning model in an interactive manner. During the training phase of FL, the clients send necessary information of the updates but never their private datasets to the central server. Because the updates contain less information than the raw training data, FL has obvious privacy advantages compared to data center training~\cite{mcmahan2017communication}.

\noindent \textbf{Horizontal and vertical FL.} Based on the feature space or the sample ID space the local datasets share, FL can be divided into horizontal FL and vertical FL \cite{yang2019federated,li2021survey}. Horizontal FL, aka. cross-device FL, describes FL scenarios where the local datasets share the same feature space but are different in samples. An example of horizontal FL is multiple regional banks having different users from their respective regions, while the feature spaces of such users are the same because the banks have very similar businesses \cite{cheng2020federated}. Vertical FL, aka. cross-silo FL, describes FL scenarios where the local datasets share the same or similar sample ID space but differ in feature space. An example of vertical FL is two different commercial companies in the same city having the same or very similar customers in the area. However, due to different business modes, the commercial company of the bank has the user’s revenue and expenditure transactions, while the commercial company of e-commerce owns the user’s browsing and purchasing history. Vertical FL enables the two different companies to jointly build a model for predicting users' living behaviors \cite{yang2019federated}.

\noindent \textbf{Homogeneous and heterogeneous FL.} Based on architectures, FL frameworks can be divided into two categories, {i.e.,} FL with a homogeneous architecture and FL with a heterogeneous architecture~\cite{lyu2020threats}. In FL with a homogeneous architecture, the local models have the same architecture as the global model, and there are two forms of FL~\cite{mcmahan2017communication}: i) FedSGD, in which each of the clients sends gradients calculated on its local data to the server; ii) FedAvg, in which each of the clients sends the calculated local models' parameters to the server. FedSGD has the advantage of convergence guarantees of the global model but requires frequent communication between the clients and the server~\cite{yin2021comprehensive}. FedAvg is more communication efficient but the global model may not perform well when the training data across the clients are highly non-identically distributed~\cite{li2020fedprox}. In FL with a heterogeneous architecture, the local models do not have to share the same architecture as the global model, while each client can still benefit during the federated training process. FedMD~\cite{li2019fedmd} (Federated Model Distillation) is a novel FL framework with a heterogeneous architecture, which shares the knowledge of each client's local model via their predictions on an unlabeled public dataset instead of the local model's parameters. Compared to FedAvg, FedMD eliminates the risk of white-box inference attacks~\cite{melis2019exploiting} and has the advantage of reduced communication costs. 

Note that there are many other FL frameworks such as FedProx~\cite{li2020fedprox}, SCAFFOLD \cite{karimireddy2020scaffold}, FedDF~\cite{lin2020ensemble}, and Cronus~\cite{chang2019cronus} that are proposed to solve different challenges in FL~\cite{yin2021comprehensive,li2021survey,wang2022octopus}. These frameworks can be divided into the categories of homogeneous and heterogeneous FL we introduced above based on their architectures. They differ from the FL frameworks of FedSGD, FedAvg, and FedMD in how the local models or the global model is trained, but the information exchange between the clients and the server is the same as the three FL frameworks, {i.e.,} the clients sharing gradients, model parameters, or predictions on an unlabeled dataset to the server. In this paper, we show the effectiveness of SIAs in the three FL frameworks of FedSGD, FedAvg, and FedMD to demonstrate that an {\em honest-but-curious} server in FL can infer source information of the training records, no matter what kind of updates are uploaded by the clients. But it is worth noting that SIAs are also effective in other FL frameworks because their communication exchange between the clients and the server is the same as the FL frameworks we evaluated in this paper.

\subsection{Membership Inference Attacks}\label{sec::mia}

Membership inference attacks (MIAs) aim to identify whether or not a data record was in the training dataset of a target model. Although an MIA seems like a simple attack, it can directly lead to severe privacy breaches of individuals. For instance, identifying that a patient’s clinical record was used to train a model associated with a disease can reveal that the patient has this disease with a high chance. Although FL has emerged as a popular privacy-aware learning paradigm, recent works~\cite{melis2019exploiting,nasr2019comprehensive,lee2021digestive,zhang2020gan,chen2020beyond,pustozerova2020information,yuan2023interaction} have demonstrated the success of MIAs on FL models. For instance, Melis et al.~\cite{melis2019exploiting} show that a malicious client in FL can infer whether or not a location profile was in the FourSquare dataset that was used to train the global model. In FL, because the training dataset of the FL system consists of all the local training data records, the existing research of MIAs are designed to infer whether or not a data record was used to train the global model, but not to identify whether a data record was used to train a 
local model~\cite{hu2022membership}. 

Currently, there are no attacks in FL to explore which client ({i.e., the source}) owns the training records identified by MIAs, while it is important and practical to explore the source information of the training records. For instance, in a promising FL application where multiple hospitals jointly train a COVID-19 model for COVID-19 diagnosis, an attacker can implement MIAs to infer who has been tested for COVID-19, but further identification of the source hospital where the people are from will make them more prone to discrimination, especially when the hospital is in a high-risk region or country~\cite{devakumar2020racism}. In this paper, we propose SIAs to show the feasibility of breaching the source privacy of the training records in FL. Our proposed SIAs shed new light on how FL reveals sensitive information and the necessity to build more private FL frameworks.

\section{Source Inference Attacks}\label{sec::03}
In this section, we first introduce the threat model. Then, we show how to leverage the Bayesian theorem to analyze how the attacker can perform SIAs based on the prediction loss of the local models.

\subsection{Threat Model}
{
\noindent \textbf{Target FL systems.} As this is the first paper that investigates the source privacy of the training records in the FL system and to avoid the vague definition of SIAs, we consider SIAs on \textit{horizontal FL} (see Section \ref{sec::pre_fl} for detailed introduction of horizontal FL) where there is only one source client for one data record. While in vertical FL, one data record can correspond to multiple source clients, and SIAs under this setting seems more interesting, we leave the investigation of SIAs in vertical FL for our future work.} 

{
We consider the server can be a semi-honest attacker who passively tries to learn the secret data from the communication updates uploaded by the local clients. During the attack process, the server follows the training protocol of the FL system but will passively try to learn the secret by mounting the SIAs. The server can acquire the communication updates uploaded by the local clients. Finally, based on the communication updates, the server can acquire the source information. There is a possibility that the attacker can be one of the clients. However, because local clients in FL can only observe the global model parameters while having little information about the identities of other clients \cite{lyu2022privacy}, it is almost impossible for a local client to achieve the attack goal (detailed in the following paragraph).}

{During the attack process, the local clients follow the training protocol of the FL system. Specifically, the local clients download the global aggregation results calculated by the server and then perform local model updates. After that, the clients upload the necessary information of updates to the server for aggregation.
}

\noindent \textbf{Attack goal.} The attacker of SIAs in FL aims to identify which client owns a training record that participants in the federated training process. The goal of this attack is motivated by FL applications where source information is sensitive. A motivating example is the FL application of multiple hospitals training a COVID-19 diagnosis model (see Section \ref{sec::01} and Section \ref{sec::mia}). Another example is the FL application of multiple users collaboratively training an image classification model. If an attacker can identify which user owns a sensitive image, the attacker can directly obtain the user's privacy based on the sensitive content of that image \cite{melis2019exploiting}.

\noindent \textbf{Attack knowledge.} We consider the attacker of the central server is \textit{honest-but-curious}: The central server will not deviate from the defined FL protocol but will attempt to infer the source information from legitimately received information from the local clients. Specifically, the central server implements SIAs to infer the source information of the clients based on the received gradients, model parameters, or predictions on an unlabeled dataset from the local clients. However, the central server will not actively manipulate these updates from the local clients and thus without affecting the utilities of the FL model. 

Because SIAs are considered as further attacks based on the foundation of MIAs, we follow the similar setting in the literature of MIAs \cite{shokri2017membership,salem2019ml,hu2022membership} that the attacker is given a data record we called \textit{a target record}, and it has been identified as a training record by MIAs. Note that we do not focus on how the attacker obtains the training record from the clients but focus on investigating the potential source privacy leakage of the training record, while one can refer to data reconstruction attacks \cite{zhu2019deep,hitaj2017deep,yin2021comprehensive,fowl2021robbing,boenisch2021curious} to see how an attacker in FL can reconstruct the training data. An SIA is said to succeed if the attacker can correctly identify which local client the target record comes from.

\subsection{Source Inference Attack Method}
In this paper, we focus the horizontal FL on classification tasks. Let $D_\textrm{train}=\{D_1,\cdots,D_K\}$ (assuming there are $K$ clients) be the training dataset of the FL system, where each $D_i$ corresponds to the local training dataset of the client $i$. We assume there are $n$ data records ${\bm{z}_1, \cdots, \bm{z}_n}$ in $D_\textrm{train}$. Each data record is represented as $\bm{z}=(\bm{x},y)$, where $\bm{x}$ is the feature and $y$ is the class label. 

\noindent \textbf{Source status. \;} In horizontal FL, each target record exists in the local dataset of one client. Thus, we can use a $K$-dimensional multinomial vector $\bm{s}$ to represent the source status of each target record. In $\bm{s}$, the $k$-th element equals $1$ representing the target record belongs to the client $k$, while all the remaining elements equal $0$. For example, assuming there are six clients in FL and the target record $\bm{z}$ comes from the second client. Then, the multinomial variable $\bm{s}$ is represented by $\bm{s}=[0,1,0,0,0,0]^{\text{T}}$. 

We assume the target record $\bm{z}_i$ comes from the client $k$ with the probability $\lambda$, i.e., the probability of the $k$-th element in $\bm{s}_i$ equals 1 is $\lambda$, denoted as $\mathbb{P}({{s}}_{ik}=1)=\lambda$. Without loss of generality, we take the case of $\bm{z}_1$ to study the source inference problem, which is defined as follows. 

\begin{definition}[Source Inference]\label{definition::sia}
Given a local model $\bm{\theta}_k$ and a target record $\bm{z}_1$, source inference on $\bm{z}_1$ aims to infer the posterior probability of $\bm{z}_1$ belonging to the client $k$:
\begin{equation}
    \mathcal{S}({\bm{\theta}_k},{\bm{z}_1}): = \mathbb{P}({{s}_{1k}} = 1|{\bm{\theta}_k},{\bm{z}_1}).
\end{equation}
\end{definition}

For source inference by Definition~\ref{definition::sia}, we aim to derive an explicit formula for $\mathcal{S}({\bm{\theta}_k},{\bm{z}_1})$ from the Bayesian perspective, which can provide insights on how to leverage the prediction loss of the local clients for implementing SIAs. We denote $\bm{\tau}=\{\bm{z}_2,\cdots,\bm{z}_n,\bm{s}_2,\cdots,\bm{s}_n\}$ as the set which includes the remaining training records and their source status. The explicit formula of $\mathcal{S}({\bm{\theta}_k},{\bm{z}_1})$ is given by the following theorem.

\begin{theorem}
Given a local model $\bm{\theta}_k$ and a target record $\bm{z}_1$, the source inference is given by:
\begin{equation}
    \mathcal{S}({\bm{\theta}_k},{\bm{z}_1}) = {\mathbb{E}_{\bm{\tau}} }\left[ {{\sigma} \left( {\log (\frac{{\mathbb{P}({\bm{\theta} _k}|{{s}_{1k}} = 1,{\bm{z}_1},\bm{\tau} )}}{{\mathbb{P}({\bm{\theta} _k}|{{s}_{1k}} = 0,{\bm{z}_1},\bm{\tau} )}}) + {\mu _\lambda }} \right)} \right],
\end{equation}
\end{theorem}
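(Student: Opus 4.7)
The plan is to unfold the posterior $\mathcal{S}(\bm{\theta}_k,\bm{z}_1)=\mathbb{P}(s_{1k}=1\mid\bm{\theta}_k,\bm{z}_1)$ by first conditioning on the auxiliary set $\bm{\tau}$ via the law of total probability, then applying Bayes' rule to the inner conditional probability, and finally rewriting the resulting expression in sigmoid form. Concretely, I would start from
\begin{equation*}
\mathcal{S}(\bm{\theta}_k,\bm{z}_1)=\mathbb{E}_{\bm{\tau}\mid\bm{\theta}_k,\bm{z}_1}\!\left[\mathbb{P}(s_{1k}=1\mid\bm{\theta}_k,\bm{z}_1,\bm{\tau})\right],
\end{equation*}
so that the remaining work is to rewrite the integrand inside the expectation in the form $\sigma\bigl(\log r+\mu_\lambda\bigr)$, with $r$ the likelihood ratio appearing in the theorem.

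Next, I would apply Bayes' rule conditional on $(\bm{z}_1,\bm{\tau})$ to the binary event $s_{1k}\in\{0,1\}$:
\begin{equation*}
\mathbb{P}(s_{1k}=1\mid\bm{\theta}_k,\bm{z}_1,\bm{\tau})=\frac{\mathbb{P}(\bm{\theta}_k\mid s_{1k}=1,\bm{z}_1,\bm{\tau})\,\mathbb{P}(s_{1k}=1\mid\bm{z}_1,\bm{\tau})}{\sum_{b\in\{0,1\}}\mathbb{P}(\bm{\theta}_k\mid s_{1k}=b,\bm{z}_1,\bm{\tau})\,\mathbb{P}(s_{1k}=b\mid\bm{z}_1,\bm{\tau})}.
\end{equation*}
Using the modeling assumption that the source indicator $s_{1k}$ is drawn a priori with $\mathbb{P}(s_{1k}=1)=\lambda$ independently of the features $\bm{z}_1$ and of $\bm{\tau}$, the prior factors simplify to $\lambda$ and $1-\lambda$. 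Dividing numerator and denominator by the ``$s_{1k}=0$'' term converts the ratio into
\begin{equation*}
\mathbb{P}(s_{1k}=1\mid\bm{\theta}_k,\bm{z}_1,\bm{\tau})=\frac{1}{1+\tfrac{1-\lambda}{\lambda}\,\tfrac{\mathbb{P}(\bm{\theta}_k\mid s_{1k}=0,\bm{z}_1,\bm{\tau})}{\mathbb{P}(\bm{\theta}_k\mid s_{1k}=1,\bm{z}_1,\bm{\tau})}},
\end{equation*}
which I would then recognise as $\sigma\!\left(\log\tfrac{\mathbb{P}(\bm{\theta}_k\mid s_{1k}=1,\bm{z}_1,\bm{\tau})}{\mathbb{P}(\bm{\theta}_k\mid s_{1k}=0,\bm{z}_1,\bm{\tau})}+\mu_\lambda\right)$ under the identification $\mu_\lambda=\log\tfrac{\lambda}{1-\lambda}$. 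Pushing the expectation over $\bm{\tau}$ back in then yields exactly the statement of the theorem.

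The main obstacle is not any hard calculation but rather pinning down the probabilistic assumptions cleanly: one must justify that $\mathbb{P}(s_{1k}=1\mid\bm{z}_1,\bm{\tau})=\lambda$, which requires treating $(\bm{z}_1,\bm{\tau})$ and the source indicator as independent under the assumed sampling model, and one must specify that the outer expectation is taken under the conditional law $\mathbb{P}(\bm{\tau}\mid\bm{\theta}_k,\bm{z}_1)$ (rather than the marginal law of $\bm{\tau}$), since that is the only choice that makes the reduction via the law of total probability exact. A secondary clarification is the implicit definition $\mu_\lambda=\log\bigl(\lambda/(1-\lambda)\bigr)$, which is what makes the sigmoid identity go through; I would state this explicitly at the outset so that the final algebraic step reduces to recognising the logistic form.
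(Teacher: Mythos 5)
Your proposal is correct and follows essentially the same route as the paper's own proof: the law of total expectation over $\bm{\tau}$, Bayes' rule on the binary event $s_{1k}$ with the prior reducing to $\lambda$ by independence of the source variables from $(\bm{z}_1,\bm{\tau})$, and the algebraic rewriting of the posterior odds as $\sigma\!\left(\log\frac{\mathbb{P}(\bm{\theta}_k\mid s_{1k}=1,\bm{z}_1,\bm{\tau})}{\mathbb{P}(\bm{\theta}_k\mid s_{1k}=0,\bm{z}_1,\bm{\tau})}+\mu_\lambda\right)$ with $\mu_\lambda=\log\frac{\lambda}{1-\lambda}$. Your remark that the outer expectation must be taken under the conditional law of $\bm{\tau}$ given $(\bm{\theta}_k,\bm{z}_1)$ is a sound clarification of what the paper's $\mathbb{E}_{\bm{\tau}}$ notation implicitly means, not a deviation from its argument.
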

\noindent where $\mu_{\lambda}=\log (\frac{\lambda }{{1 - \lambda }})$, and ${\sigma}(\cdot)$ is a sigmoid function defined as ${\sigma} (\bm{x}) = {(1 + {e^{ \bm{- x}}})^{ - 1}}$.

\begin{proof}
Applying the law of total expectation~\cite{montgomery2010applied}, we have:
\begin{equation}
    \begin{aligned}
\mathcal{S}({\bm{\theta}_k},{\bm{z}_1}) &= \mathbb{P}({s_{1k}} = 1|{\bm{\theta}_k},{\bm{z}_1})\\
\quad \quad \quad \;\; &= {\mathbb{E}_{\bm{\tau}} }[\mathbb{P}({s_{1k}} = 1|{\bm{\theta} _k},{\bm{z}_1},\bm{\tau} )].
\end{aligned}
\end{equation}
Applying the Bayes' formula, we have:
\begin{equation}\label{eqn::prob}
    \mathbb{P}({s_{1k}} = 1|{\bm{\theta} _k},{\bm{z}_1},\bm{\tau} ) = \frac{{\mathbb{P}({\bm{\theta} _k}|{s_{1k}} = 1,{\bm{z}_1},\tau )\mathbb{P}({s_{1k}} = 1|{\bm{z}_1},\bm{\tau} )}}{{\mathbb{P}({\bm{\theta} _k}|{\bm{z}_1},\bm{\tau} )}}.
\end{equation}
As the source variables $\bm{s}_1, \cdots, \bm{s}_n$ are independent, event $s_{1k}=1$ is independent from $\bm{z}_1, \bm{\tau}$. Thus, we have:
\begin{equation}\label{eqn::states}
    \mathbb{P}({s_{1k}} = 1|{\bm{z}_1},\bm{\tau} ) = \mathbb{P}({s_{1k}} = 1).
\end{equation}
Let:
\begin{equation}\label{eqn::alpha}
    \phi : = \mathbb{P}({\bm{\theta} _k}|{s_{1k}} = 1,{\bm{z}_1},\bm{\tau} )\mathbb{P}({s_{1k}} = 1).
\end{equation}
\begin{equation}\label{eqn::beta}
    \omega : = \mathbb{P}({\bm{\theta} _k}|{s_{1k}} = 0,{\bm{z}_1},\bm{\tau} )\mathbb{P}({s_{1k}} = 0).
\end{equation}
Plugging eqn. (\ref{eqn::states}), eqn. (\ref{eqn::alpha}), and eqn. (\ref{eqn::beta}) into eqn. (\ref{eqn::prob}), we have:
\begin{equation}
\begin{aligned}
\mathbb{P}({s_{1k}} = 1|{\bm{\theta} _k},{\bm{z}_1},\bm{\tau} ) &= \frac{{\mathbb{P}({\bm{\theta} _k}|{s_{1k}} = 1,{\bm{z}_1},\bm{\tau} )\mathbb{P}({s_{1k}} = 1)}}{{\mathbb{P}({\bm{\theta} _k}|{\bm{z}_1},\bm{\tau} )}}\\ 
\quad \quad \quad \quad \quad \quad \quad \; &= \frac{\phi }{{\phi  + \omega }}\\ 
\quad \quad \quad \quad \quad \quad \quad \; &= \sigma \left( {\log \left(\frac{\phi }{\omega }\right)} \right).
\end{aligned}
\end{equation}
Given that $\mathbb{P}({s}_{ik}=1)=\lambda$, we have:
\begin{equation}
    \log \left( {\frac{\phi }{\omega }} \right) = \log \left( {\frac{{\mathbb{P}({\bm{\theta} _k}|{s_{1k}} = 1,{\bm{z}_1},\bm{\tau} )}}{{\mathbb{P}({\bm{\theta} _k}|{s_{1k}} = 0,{\bm{z}_1},
    \bm{\tau} )}}} \right) + \log \left( {\frac{\lambda }{{1 - \lambda }}} \right).
\end{equation}
Let $\mu_{\lambda}=\log (\frac{\lambda }{{1 - \lambda }})$, we have:
\begin{equation}
\begin{aligned}
    \mathcal{S}({\bm{\theta}_k},{\bm{z}_1}) &= {\mathbb{E}_{\bm{\tau}} }[\mathbb{P}({s_{1k}} = 1|{\bm{\theta} _k},{\bm{z}_1},\bm{\tau} )] \\
    &= {\mathbb{E}_{\bm{\tau}} }  \left[ { \sigma \left( {\log \left(\frac{\phi }{\omega }\right)} \right) } \right]\\
    &= {\mathbb{E}_{\bm{\tau}} }\left[ {{\sigma} \left( {\log (\frac{{\mathbb{P}({\bm{\theta} _k}|{{s}_{1k}} = 1,{\bm{z}_1},\bm{\tau} )}}{{\mathbb{P}({\bm{\theta} _k}|{{s}_{1k}} = 0,{\bm{z}_1},\bm{\tau} )}} + {\mu _\lambda }} \right)} \right],
\end{aligned}
\end{equation}
which concludes the proof.
\end{proof}
We observe that \textit{Theorem 1} does not have the loss $\ell(\cdot)$ form and only relies on the posterior parameter $\bm{\theta}_k$ in expectation given $\{\bm{z}_1,\cdots,\bm{z}_n,\bm{s}_1,\cdots,\bm{s}_n \}$ is a random variable. To make $\mathcal{S}({\bm{\theta}_k},{\bm{z}_1})$ more explicit with the loss term, we assume an ML algorithm produced parameters $\bm{\theta}$ follows a posterior distribution. Specifically, following the assumption in the previous work~\cite{sablayrolles2019white}, we assume the posterior distribution of an ML model $\bm{\theta}$ as follows:
\begin{equation}
    p(\bm{\theta} |{\bm{z}_1}, \cdots ,{\bm{z}_n}) \propto {e^{ - \frac{1}{\gamma }\sum\nolimits_{i = 1}^n {\ell (\bm{\theta} ,{\bm{z}_i})} }},
\end{equation}

\noindent where $\gamma$ is a temperature parameter which controls the stochasticity of $\bm{\theta}$. Following this assumption, given $\{\bm{z}_1,\cdots,\bm{z}_n,\bm{s}_1,\cdots,\bm{s}_n \}$, the posterior distribution of $\bm{\theta}_k$ follows:
\begin{equation}
    p({\bm{\theta} _k}|{\bm{z}_1}, \cdots ,{\bm{z}_n},{\bm{s}_1}, \cdots ,{\bm{s}_n}) \propto {e^{ - \frac{1}{\gamma }\sum\nolimits_{i = 1}^n {{s_{ik}}\ell (\bm{\theta}_k ,{\bm{z}_i})} }}.
\end{equation}
We further define the posterior distribution of $\bm{\theta}_k$ given training records $\bm{z}_2,\cdots,\bm{z}_n$ and their source status $\bm{s}_2,\cdots,\bm{s}_n$ :
\begin{equation}
    {p_{\bm{\tau}} }({\bm{\theta} _k}): = \frac{{{e^{ - \frac{1}{\gamma }\sum\nolimits_{i = 2}^n {{s_{ik}}\ell (\bm{\theta}_k ,{\bm{z}_i})} }}}}{{\int_{\bm{t}} {{e^{ - \frac{1}{\gamma }\sum\nolimits_{i = 2}^n {{s_{ik}}\ell (\bm{t} ,{\bm{z}_i})} }}} d\bm{t}}},
\end{equation}
where the denominator is a constant value. The following theorem explicitly demonstrates how to conduct the source inference with the loss term.
\begin{theorem}\label{theo::opt}
Given local resulting model $\bm{\theta}_k$ and a target record $\bm{z}_1$, the source inference attack is given by:
\begin{equation}\label{eqn::explicit-sia}
    \mathcal{S}({\bm{\theta} _k},{\bm{z}_1}) = {\mathbb{E}_{\bm{\tau}} }\left[ {\sigma \left( {g({\bm{z}_1},\bm{\theta} ,{p_{\bm{\tau}} }) + {\mu _\lambda }} \right)} \right],
\end{equation}

\noindent where
\begin{align}
{\ell _{{p_{\bm{\tau}} }}}({\bm{z}_1}): &=  - \gamma \log \left( {\int_{\bm{t}} {{e^{ - \frac{1}{\gamma }\ell (\bm{t},{\bm{z}_1})}}} {p_{\bm{\tau}} }(\bm{t})d\bm{t}} \right) \label{loss_other},\\
\ell (\bm{\theta}_k,{\bm{z}_1}): &=  - \gamma \log \left( {{e^{ - \frac{1}{\gamma }\ell ({\bm{\theta}_k},{\bm{z}_1})}}} \right) \label{loss_k}, \\
g({\bm{z}_1},\bm{\theta} ,{p_{\bm{\tau}} }): &= \frac{1}{\gamma }({\ell _{{p_{\bm{\tau}} }}}({\bm{z}_1}) - \ell ({\bm{\theta} _k},{\bm{z}_1})). \label{gap}
\end{align}
\end{theorem}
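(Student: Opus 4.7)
The plan is to start from Theorem 1 and reduce the log-likelihood ratio inside the sigmoid, namely $\log\!\bigl(\mathbb{P}(\bm{\theta}_k\mid s_{1k}=1,\bm{z}_1,\bm{\tau})/\mathbb{P}(\bm{\theta}_k\mid s_{1k}=0,\bm{z}_1,\bm{\tau})\bigr)$, to $g(\bm{z}_1,\bm{\theta},p_{\bm{\tau}})$ under the Gibbs posterior assumption. Since both conditional densities are written in the same exponential family, the argument is essentially a substitution followed by careful cancellation of normalizers, after which definitions (13) and (14) let me read off the answer.

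The first step is to write out both conditional densities explicitly. Conditioning on $s_{1k}=1$ treats $\bm{z}_1$ as one of client $k$'s training points, so the density at $\bm{\theta}_k$ is proportional to $\exp\!\bigl(-\tfrac{1}{\gamma}\ell(\bm{\theta}_k,\bm{z}_1)\bigr)\cdot\exp\!\bigl(-\tfrac{1}{\gamma}\sum_{i\ge 2} s_{ik}\ell(\bm{\theta}_k,\bm{z}_i)\bigr)$ divided by its integral over $\bm{t}$. Conditioning on $s_{1k}=0$ drops the $\bm{z}_1$ factor. Denoting the normalizer of the $s_{1k}=0$ density by $Z_{\bm{\tau}}$, I observe that $Z_{\bm{\tau}}$ is exactly the denominator in the definition (12) of $p_{\bm{\tau}}(\bm{t})$; and the normalizer of the $s_{1k}=1$ density factorizes as $Z_{\bm{\tau}}\int_{\bm{t}} e^{-\ell(\bm{t},\bm{z}_1)/\gamma}\,p_{\bm{\tau}}(\bm{t})\,d\bm{t}$.

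The second step is to take the ratio of the two conditional densities. The $\exp\!\bigl(-\tfrac{1}{\gamma}\sum_{i\ge 2} s_{ik}\ell(\bm{\theta}_k,\bm{z}_i)\bigr)$ factors in the numerators cancel, and so does one copy of $Z_{\bm{\tau}}$ after division, leaving the clean expression $e^{-\ell(\bm{\theta}_k,\bm{z}_1)/\gamma}\big/\!\int_{\bm{t}} e^{-\ell(\bm{t},\bm{z}_1)/\gamma}p_{\bm{\tau}}(\bm{t})\,d\bm{t}$. Taking logs, the numerator contributes $-\ell(\bm{\theta}_k,\bm{z}_1)/\gamma$, which matches the identity rewriting (14), and the denominator contributes $+\ell_{p_{\bm{\tau}}}(\bm{z}_1)/\gamma$ by definition (13). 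Their difference is precisely $g(\bm{z}_1,\bm{\theta},p_{\bm{\tau}})$ in (15). Substituting into the expression from Theorem 1 yields (11).

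I do not expect genuine mathematical difficulty; the one step that needs care is the factorization of the $s_{1k}=1$ normalizer into $Z_{\bm{\tau}}$ times an integral against $p_{\bm{\tau}}$, so that the $p_{\bm{\tau}}$-structure emerges cleanly after cancellation. Everything else is algebra plus matching with the definitions, and the sigmoid together with $\mu_\lambda$ are untouched from Theorem 1.
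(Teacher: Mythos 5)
Your proposal is correct and follows essentially the same route as the paper's proof: both write the two conditional densities under the Gibbs posterior assumption, factor the $s_{1k}=1$ normalizer as $Z_{\bm{\tau}}$ times $\int_{\bm{t}} e^{-\ell(\bm{t},\bm{z}_1)/\gamma}p_{\bm{\tau}}(\bm{t})\,d\bm{t}$ so that the ratio collapses to $e^{-\ell(\bm{\theta}_k,\bm{z}_1)/\gamma}\big/\!\int_{\bm{t}} e^{-\ell(\bm{t},\bm{z}_1)/\gamma}p_{\bm{\tau}}(\bm{t})\,d\bm{t}$, and then read off $g$ and plug into Theorem 1 (the paper simply carries the prior factors $\lambda$, $1-\lambda$ inside $\phi$ and $\omega$ rather than keeping $\mu_\lambda$ separate, which is an immaterial difference).
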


\begin{proof}
For $\phi$ and $\omega$ defined in eqn.~(\ref{eqn::alpha}) and eqn.~(\ref{eqn::beta}), we have:
\begin{equation}
\begin{aligned}
\phi  &= \lambda \frac{{{e^{ - \frac{1}{\gamma }\ell ({\bm{\theta}_k},{\bm{z}_1})}}{e^{ - \frac{1}{\gamma }\sum\nolimits_{i = 2}^n {{s_{ik}}\ell ({\bm{\theta}_k},{\bm{z}_i})} }}}}{{\int_{\bm{t}} {{e^{ - \frac{1}{\gamma }\ell (t,{\bm{z}_1})}}{e^{ - \frac{1}{\gamma }\sum\nolimits_{i = 2}^n {{s_{ik}}\ell (\bm{t},{\bm{z}_i})} }}d\bm{t}} }}\\
 &= \lambda \frac{{{e^{ - \frac{1}{\gamma }\ell ({\bm{\theta} _k},{\bm{z}_1})}}{p_{\bm{\tau}} }({\bm{\theta} _k})}}{{\int_{\bm{t}} {{e^{ - \frac{1}{\gamma }\ell (\bm{t},{\bm{z}_1})}}{p_{\bm{\tau}} }(\bm{t})d\bm{t}} }},
\end{aligned}
\end{equation}
\begin{equation}
\begin{aligned}
\omega  &= (1 - \lambda )\frac{{{e^{ - \frac{1}{\gamma }\sum\nolimits_{i = 2}^n {{s_{ik}}\ell ({\bm{\theta}_k},{\bm{z}_i})} }}}}{{\int_{\bm{t}} {{e^{ - \frac{1}{\gamma }\sum\nolimits_{i = 2}^n {{s_{ik}}\ell (\bm{t},{\bm{z}_i})} }}d\bm{t}} }}\\
 &= (1 - \lambda ){p_{\bm{\tau}} }(\bm{\theta}_k).
\end{aligned}
\end{equation}
Thus, we have:
\begin{equation}
\begin{aligned}
    \log \left( {\frac{\phi }{\omega }} \right) &=  - \log \left( {\int_{\bm{t}} {{e^{ - \frac{1}{\gamma }\ell (\bm{t},{\bm{z}_1})}}{p_{\bm{\tau}} }(\bm{t})d\bm{t}} } \right) + \log \left( {{e^{ - \frac{1}{\gamma }\ell ({\bm{\theta}_k},{\bm{z}_1})}}} \right)\\
 &\mathrel{\phantom{=}}+ \log \left( {\frac{\lambda }{{1 - \lambda }}} \right)\\
 &= \frac{1}{\gamma }({\ell _{{p_{\bm{\tau}} }}}({\bm{z}_1}) - \ell \left({\bm{\theta} _k},{\bm{z}_1})\right) + {\mu _\lambda }.
\end{aligned}
\end{equation}
\begin{equation}
\begin{aligned}
    \mathcal{S}({\bm{\theta}_k},{\bm{z}_1}) &= {\mathbb{E}_{\bm{\tau}} }  \left[ { \sigma \left( {\log \left(\frac{\phi }{\omega }\right)} \right) } \right]\\
    &= {\mathbb{E}_{\bm{\tau}} }\left[ { {\sigma} \left(  \frac{1}{\gamma }({\ell _{{p_{\bm{\tau}} }}}({\bm{z}_1}) - \ell \left({\bm{\theta} _k},{\bm{z}_1})\right) + {\mu _\lambda } \right) } \right] \\
    &= {\mathbb{E}_{\bm{\tau}} }\left[ {\sigma \left( {g({\bm{z}_1},\bm{\theta} ,{p_{\bm{\tau}} }) + {\mu _\lambda }} \right)} \right],
\end{aligned}
\end{equation}
which concludes the proof.
\end{proof}

\begin{figure*}
    \centering
    \includegraphics[height=2.8in,width=0.80\linewidth]{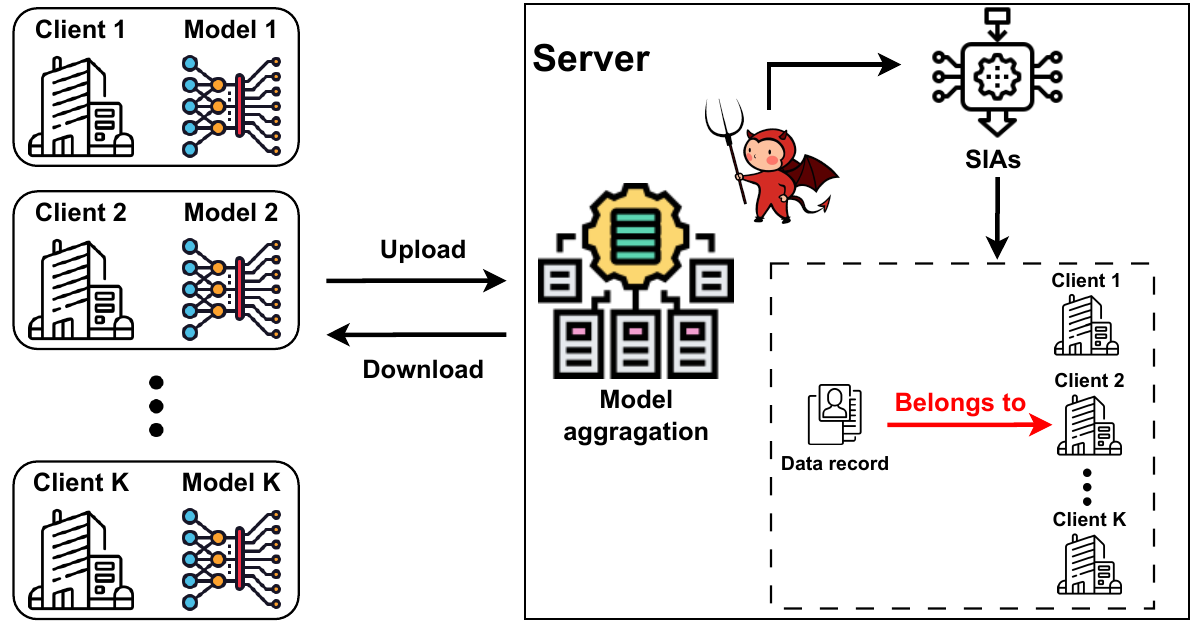}
    \caption{{An overview of SIAs in FL. In each communication round, each client transmits the necessary information of updates to the central server for aggregation. The central server faithfully follows the defined FL protocol while inferring the source clients of the target data records from legitimately received information from the local clients.}}
    \label{fig::sia}
\end{figure*}

\subsection{Analysis of the Source Inference Attack}\label{sec::ana_sias}
Theorem~\ref{theo::opt} implies that an attacker can infer the posterior probability of a target record belonging to a particular client by leveraging its prediction loss of the client's local model. Now we analyze the meaning of each term in eqn.~(\ref{eqn::explicit-sia}) and discuss how they affect the posterior probability. We conclude from Theorem~\ref{theo::opt} and derive an attack method that leverages the prediction loss of the local models to infer the source client of a target record.

\noindent \textbf{Meaning of $g({\bm{z}_1},\bm{\theta} ,{p_{\bm{\tau}} })$. \;} The term $g({\bm{z}_1},\bm{\theta} ,{p_{\bm{\tau}} })$ is the gap between $\ell _{{p_{\bm{\tau}} }}({\bm{z}_1})$ and $\ell \left({\bm{\theta} _k},{\bm{z}_1}\right)$. Because $\bm{\tau}$ is a training set that does not contain any information about $\bm{z}_1$, $p_{\bm{\tau}}$ corresponds to a posterior distribution of an ML model's parameters that trained without using $\bm{z}_1$. Note that $\ell(\cdot)$ is a loss function that measures the performance of the model on a data record. $\ell \left({\bm{\theta} _k},{\bm{z}_1}\right)$ is the local model $\bm{\theta}_k$'s evaluation of the loss on the target record $\bm{z}_1$. Comparing eqn.~(\ref{loss_other}) and eqn.~(\ref{loss_k}), we can find that $\ell _{{p_{\bm{\tau}} }}({\bm{z}_1})$ is the expectation of the loss $\ell (\cdot,\bm{z}_1)$ over the typical models that have not seen $\bm{z}_1$. Thus, we can interpret $g({\bm{z}_1},\bm{\theta} ,{p_{\bm{\tau}} })$ as the difference between $\bm{\theta}_k$'s loss on $\bm{z}_1$ and other models' (trained without $\bm{z}_1$) average loss on $\bm{z}_1$. 

In other words, $g({\bm{z}_1},\bm{\theta} ,{p_{\bm{\tau}} })$ is the differences between the prediction loss of the local model of client $k$ and the average prediction loss of other clients' local models. If $\ell _{{p_{\bm{\tau}} }}({\bm{z}_1}) \approx \ell \left({\bm{\theta} _k},{\bm{z}_1}\right)$, which means the client $k$ behaves almost the same as other clients on $\bm{z}_1$, then $g({\bm{z}_1},\bm{\theta} ,{p_{\bm{\tau}} }) \approx 0$. Since $\sigma(\mu_{\lambda})=\lambda$, the posterior probability $\mathcal{S}({\bm{\theta} _k},{\bm{z}_1})$ is equal to $\lambda$. Thus, we have no source information gain on $\bm{z}_1$ beyond prior knowledge. In FL, the prior knowledge $\mathbb{P}({s}_{ik}=1)=\lambda=\frac{1}{K}$. In this case, the source inference equals \textit{random guess}. However, if $\ell _{{p_{\bm{\tau}} }}({\bm{z}_1}) > \ell \left({\bm{\theta} _k},{\bm{z}_1}\right)$, that is, the client $k$ performs better than other clients on $\bm{z}_1$, $g({\bm{z}_1},\bm{\theta} ,{p_{\bm{\tau}} })$ becomes positive. When $g({\bm{z}_1},\bm{\theta} ,{p_{\bm{\tau}} })>0$, $\mathbb{P}({{s}_{1k}} = 1|{\bm{\theta}_k},{\bm{z}_1})>\lambda$ and thus we gain non-trivial source information on $\bm{z}_1$. Moreover, since $\sigma(\cdot)$ is non decreasing, smaller $ \ell \left({\bm{\theta} _k},{\bm{z}_1})\right)$ indicates higher probability that $\bm{z}_1$ belongs to the client $k$. 

\noindent \textbf{Conclusion from Theorem~\ref{theo::opt}. \;} We conclude that the smaller the loss of client $k$'s local model on a target record $\bm{z}_1$, the higher posterior probability that $\bm{z}_1$ belongs to the client $k$. This motivates us to design the source inference attack that the client whose local model has the smallest loss on a target record should own this record. Moreover, if the client's local model's behavior on its local training data is different from that of other clients, our attack will always achieve better performance than randomly guessing. We give more empirical evidence in Section~\ref{sec::04}.

\subsection{Source Inference Attacks in Different FL Frameworks}
In this paper, we investigate SIAs in three FL frameworks under horizontal FL. Specifically, we investigate SIAs in FedSGD \cite{mcmahan2017communication}, FedAvg \cite{mcmahan2017communication}, and FedMD \cite{li2019fedmd} where local clients upload gradients, model parameters, or predictions on an unlabeled dataset to the server. {The success of SIAs in the three FL frameworks sheds light on how the communications between clients and the semi-honest central server in existing FL frameworks enable the server to mount SIAs to steal source information.} Fig.~\ref{fig::sia} shows an overview of how the honest-but-curious server implements SIAs in FL.

Intuitively, the server in FedAvg can directly conduct an SIA in each communication round because it receives the parameters of local models from the clients. Thus, the server can directly use the local models to calculate the prediction loss of a target record for implementing source inference. However, in FedSGD and FedMD, the server cannot directly implement SIAs because it cannot directly leverage the updates from the clients to calculate the prediction loss of the local models on the target records. In these two frameworks, we introduce two strategies to enable the server to implement the proposed SIAs.

In FedSGD, each client $k$ uploads the average gradient ${\bm{g}_k} = \nabla {\ell }({\bm{\theta} _{t-1}},D_k)$ calculated on its local data $D_k$ at the current global model $\bm{\theta}_{t-1}$. Thus, the server can leverage the gradient from each client to update the global model $\bm{\theta} _{t}^k \leftarrow {\bm{\theta} _{t-1}} - \eta {\bm{g}_k}$ separately, where $\eta$ is a fixed learning rate in the FL framework. Note that $\bm{\theta} _{t}^k$ essentially is the updated local model of the client $k$ in the $t$-th communication round. Thus, the server can use $\bm{\theta} _{t}^k$ to calculate the prediction loss of each local model in each communication round to conduct SIAs. 

{In FedMD, the server can leverage knowledge distillation \cite{ba2014deep,hinton2015distilling} to achieve SIAs. Knowledge distillation aims to transfer knowledge of larger models to smaller models so that the smaller models are as accurate as larger models. The larger models are referred to as teacher models and the smaller models are referred to as students models. During knowledge distillation, the student model is trained to match the logits of the teacher model for learning its knowledge. Knowledge distillation enables the smaller student model to have similar performance to their teacher models \cite{crowley2018moonshine}. In FedMD, the server cannot directly use the updates from the clients to calculate the prediction loss of the target records because the updates are predictions of a public dataset. However, because the clients' predictions on the public dataset represent the knowledge of the local models, the server can leverage knowledge distillation to mount SIAs. Specifically, for each of the local clients, the server considers it as a teacher and leverages its predictions to train a student model to mimic the local model. Because the student models are expected to behave similarly to the local models, the server can use the prediction loss of the student models on the target records as the estimate of the prediction loss of the local models. Thus, based on the estimated prediction loss, the semi-honest server can mount SIAs in FedMD. Note that although the student models mimic the local models, there is a bias between the estimated prediction loss and the actual loss of the local model. However, if the estimated prediction loss of the client owning the target instance is distinguishable from the estimated losses of other clients, the SIAs can still succeed, which we will show in the experiments.}

Based on the analysis above, we propose three FL frameworks \fedsgd, \fedavg, and \fedmd \;that allow an honest-but-curious server to conduct SIAs without deviating from the normal FedSGD, FedAvg, or FedMD protocols. 
Algorithm \ref{alg::fedsgd}, Algorithm \ref{alg::fedavg}, and Algorithm \ref{alg::fedmd} describe \fedsgd, \fedavg, and \fedmd, respectively. Each algorithm consists of two steps, {i.e.,} \textit{Server executes} and \textit{ClientUpdate}. In each algorithm, we assume there are $K$ clients and we take a target record of $\bm{z}$ as an example to show how to implement SIAs.

\noindent \textbf{\fedsgd: \;} i) \textbf{Server executes:} As depicted in Algorithm 1, the honest-but-curious server faithfully follows FedSGD protocol (Lines 2-5, 8, 10, 11, 13, and 14) while implementing SIAs (Lines 6, 7, and 9). First, the server initializes the weights of the global model randomly following the step in Line 2. Next, in each communication round, the server will receive the gradient calculated on the local private dataset from each client (Lines 3-5). Then, the server can use the gradient from each client to update the global model separately to obtain the local models (Line 6). The server calculates each local model's prediction loss on $\bm{z}$ and obtains the source $i$ of $\bm{z}$ by finding which client has the smallest loss on $\bm{z}$ (Line 7 and 8). Last, the server averages the gradients to update a new global model (Line 10), which is then distributed for the next round of updating. ii) \textbf{ClientUpdate:} Each client has its own private dataset. In each communication round, the client calculates the gradient based on the private dataset and the current global model (Line 13). Then, the gradients are sent back to the server (Line 14).

\begin{algorithm}[t!]
\caption{\fedsgd\: The $K$ clients are indexed by $k$; $\eta$ represents the learning rate; $\bm{z}$ represents a target record.}
\label{alg::fedsgd}
\begin{algorithmic}[1]
\State \textbf{\hskip 9em Server executes}

\State initialize $\theta_{0}$ \Comment{\fade{initialize weights of the global model}}
\For{each round $t = 1$ to $T$}
\For{each client $k$}
\State $g^{k}_{t}$ $\leftarrow$ \textbf{ClientUpdate}($\theta_{t-1}$) \Comment{\fade{local gradient of the client k at round t}}
\State ${\theta} _{t}^k \leftarrow {{\theta} _{t-1}} - \eta {{g}_k}$ \Comment{\fade{use the gradient of the client to obtain the local model}}
\State Compute $\ell({\theta^{k}_{t},{\bm{z}})}$ \Comment{\fade{calculate the local prediction loss on ${\bm{z}}$}}
\EndFor

\State $i \leftarrow argmin (\ell({\theta}_1,{\bm{z}}) ,\cdots,\ell({\theta}_K,{\bm{z}}))$ \Comment{\fade{identify the source client}}

\State ${\theta _t} \leftarrow {\theta _{t - 1}} - \eta \sum\nolimits_k {g_t^k}$ \Comment{\fade{update the global model}}
\EndFor

\State \textbf{\hskip 9em ClientUpdate}($\theta$) \Comment{\fade{run on client k}}
\State Compute ${{g}_k} = \nabla {\ell }({{\theta}},D_k)$

\State \Return ${{g}_k}$ \Comment{\fade{return the gradient to the central server}}

\end{algorithmic}
\end{algorithm}

\begin{algorithm}[t!]
\caption{\fedavg\: The $K$ clients are indexed by $k$; $B$ represents the local mini-batch size; $E$ represents the number of local epochs; $\eta$ represents the learning rate; $\bm{z}$ represents a target record.}
\label{alg::fedavg}
\begin{algorithmic}[1]
\State \textbf{\hskip 9em Server executes}

\State initialize $\theta_{0}$ \Comment{\fade{initialize weights of the global model}}
\For{each round $t = 1$ to $T$}
\For{each client $k$}
\State $\theta^{k}_{t}$ $\leftarrow$ \textbf{ClientUpdate}($\theta_{t-1}$) \Comment{\fade{local model weight of client k at round t}}
\State Compute $\ell({\theta^{k}_{t},{\bm{z}})}$ \Comment{\fade{calculate the local prediction loss on ${\bm{z}}$}}
\EndFor

\State $i \leftarrow argmin (\ell({\theta}_1,{\bm{z}}) ,\cdots,\ell({\theta}_K,{\bm{z}}))$ \Comment{\fade{identify the source client}}

\State ${\theta _t} \leftarrow \sum\nolimits_k {\frac{{{n^{(k)}}}}{n}} \theta _t^k$ \Comment{\fade{update the global model}}
\EndFor

\State \textbf{\hskip 9em ClientUpdate}($\theta$) \Comment{\fade{run on client k}}

\State  $\mathcal{B}$ $\leftarrow$ (split $D_{k}$ into multiple batches of size $B$) 
\For{each local epoch $i$ from $1$ to $E$}
\For{batch $b \in \mathcal{B}$}
\State $\theta  \leftarrow \theta  - \eta \nabla \ell (b;\theta )$ \Comment{\fade{mini-batch gradient descent}}
\EndFor
\EndFor

\State \Return $\theta$ \Comment{\fade{return local model to the central server}}

\end{algorithmic}
\end{algorithm}

\begin{algorithm}[t!]
\caption{\fedmd\: The $K$ clients are indexed by $k$; $\bm{z}$ represents a target record; $D_0$ is a public dataset; $D_k$ is a private dataset of the client $k$; $\theta_k$ represents the local model of the client k; $\theta_s$ is a student model.}
\label{alg::fedmd}
\begin{algorithmic}[1]
\State \textbf{\hskip 9em Initialization phase}
\State the clients initialize local models $\theta_1,\cdots,\theta_K$ and each client k updates her model weight $\theta_k$ on her private dataset $D_k$
\For{each client $k$}
\State $Y_0^k = \textsc{\textsf{\small{Predict}}}(\theta_k,D_0)$
\State send $Y_0^k$ to the server
\EndFor
\State server calculates ${{\tilde Y}_0} = \frac{1}{K}\sum\nolimits_k {Y_0^k}$ 

\vspace{1em}

\State \textbf{\hskip 9em Server executes}

\For{each round $t = 1$ to $T$}
\For{each client $k$}
\State $Y^{k}_{t}$ $\leftarrow$ \textbf{ClientUpdate}($\theta_k,{{\tilde Y}_{t-1}},D_0$) \Comment{\fade{predictions of client k on $D_0$ at round t}}
\State $\theta_s^k \leftarrow$ \textsc{\textsf{\small{Train}}}$(\theta_s,Y^{k}_{t},D_0)$ \Comment{\fade{server trains a student model to mimic the local model}}
\State Compute $\ell({\theta^{k}_{s},{\bm{z}})}$ \Comment{\fade{calculate local loss on ${\bm{z}}$ using the student model}}
\EndFor
\EndFor
\State $i \leftarrow argmin (\ell({\theta}_s^1,{\bm{z}}) ,\cdots,\ell({\theta}_s^K,{\bm{z}}))$ \Comment{\fade{identify the source client}}
\State ${{\tilde Y}_t} = \frac{1}{K}\sum\nolimits_k {Y_t^k}$ \Comment{\fade{prediction aggregation at the server}}

\vspace{0.5em}

\State \textbf{\hskip 9em ClientUpdate}($\theta,{{\tilde Y}_{t-1}},D_0$) \Comment{\fade{run on client k}}
\For{each local epoch $i$ from $1$ to $E_1$}
\State Digest: $\theta \leftarrow$ \textsc{\textsf{\small{Train}}}$(\theta,{\tilde Y}_{t-1},D_0)$ \Comment{\fade{train the model on the public dataset}}
\EndFor
\For{each local epoch $i$ from $1$ to $E_2$}
\State Revisit: $\theta \leftarrow$ \textsc{\textsf{\small{Train}}}$(\theta,D_k)$ \Comment{\fade{train the model on the private dataset}}
\EndFor
\State $Y = \textsc{\textsf{\small{Predict}}}(\theta,D_0)$ \Comment{\fade{predictions of class scores on $D_0$}}

\State \Return $Y$ \Comment{\fade{return predictions to the central server}}

\end{algorithmic}
\end{algorithm}

\noindent \textbf{\fedavg: \;} i) \textbf{Server executes:} As depicted in Algorithm \ref{alg::fedavg}, the honest-but-curious server faithfully follows FedAvg protocol (Lines 2-5, 7, 9, 10, 12-18) while implementing SIAs (Lines 6 and 8). First, the server initializes the weights of the global model randomly (Line 2). Next, in each communication round, the server will receive the updated models from clients (Lines 3, 4, and 5) and calculates each model's prediction loss on $\bm{z}$ (Line 6). The server then obtains the source $i$ of $\bm{z}$ by finding which client has the smallest loss on $\bm{z}$ (Line 8). Last, the server averages the uploaded local models to allocate a new global model as usual (Line 9). ii) \textbf{ClientUpdate:} In each communication round, the local clients will update the global model distributed from the server. Each client performs local updates by using mini-batch gradient descent to update the local models' weights (Lines 12-17). To minimize communication with the server, clients update local models for several epochs. Then, the updated models are sent back to the server (Line 18).

\noindent \textbf{\fedmd: \;} i) \textbf{Server executes:} As depicted in Algorithm \ref{alg::fedmd}, the honest-but-curious server faithfully follows FedMD protocol (Lines 2-7, 9-11, 14, 15, 17, and 19-26) while implementing SIAs (Lines 12, 13, and 16). In each communication round, the server receives the predictions of the public dataset from each client (Lines 9-11). Then, for each client, the server trains a student model on the public dataset to imitate the local model (Line 12). The server calculates each student model's prediction loss on $\bm{z}$ and obtains the source $i$ by finding which student model has the smallest prediction loss on $\bm{z}$ (Lines 13 and 16). Last, the server aggregates the predictions from each client for the next round of updating (Line 17). ii) \textbf{ClientUpdate:} First, each client trains the local model on the soft-labeled public dataset to approach the consensus on the public dataset (Lines 19-21 for Digest). Then, each client trains the local model on the private dataset (Lines 22-25 for Revisit). Last, each client computes the predictions on the public dataset and sends the predictions to the server (Line 26).

\noindent \textbf{Complexity analysis of SIAs. \;} The proposed SIAs leverage the prediction loss of local models to infer the source client of a target record. The computational complexity of SIAs mainly depends on two factors. One is the evaluation of the prediction loss of local models, and the other is the identification of the client having the smallest prediction loss. In \fedavg, because the server directly leverages the uploaded local models to evaluate the prediction loss, the computation cost is determined by the model size. The time complexity of SIAs in \fedavg\; with respect to the number of clients $K$ is $O(K)$. As the server received the gradients from each client to update the global model in \fedsgd, the local models can be obtained by the server for SIAs with no extra computation cost. Thus, the computational complexity of \fedsgd\;is the same as \fedavg. In \fedmd, SIAs are more complicated than that in \fedsgd\;and \fedavg\;because the server requires a student model to mimic the behavior of local models. The computation cost of training the student models is mainly determined by the student model size and the size of the public dataset. Then, the server can implement SIAs based on the student models, with a time complexity of $O(K)$ with respect to the number of clients. In our experiments using PyTorch with a single GPU NVIDIA Tesla P40, the execution time of SIAs in \fedsgd\;and \fedavg\;is within seconds and in \fedmd\;is within one minute.

\section{Experiments}\label{sec::04}
In this section, we empirically evaluate \fedsgd, \fedavg, and \fedmd. We first introduce datasets and model architectures used in the experiments. Then, we demonstrate the effectiveness of SIAs in the three FL frameworks and conduct a detailed ablation study to identify how different factors in FL influences the performance of SIAs. In the end, we discuss why our SIAs work.
\subsection{Datasets and Model Architectures}

\noindent \textbf{Datasets. \;} The datasets used in our experiments are reported in Table~\ref{tab::datasets}. We create an IID \textit{Synthetic} dataset to allow us to precisely manipulate data heterogeneity. We generate \textit{Synthetic} as described in previous works~\cite{li2020fedprox,li2019convergence}. The remaining datasets are widely used datasets for simulating and evaluating the privacy leakage on machine learning models~\cite{shokri2017membership,jayaraman2019evaluating,ganju2018property,wang2019beyond}. For MNIST and CIFAR10, the training dataset and testing datasets have been divided when downloading them. For the remaining datasets, we use the \textit{train\_test\_split} function from the \textit{sklearn}~\footnote{\url{https://scikit-learn.org/stable/}} toolkit to randomly select $80$\% samples as the training records (before partitioning client data), and the remaining $20$\% records are used as the testing records. We use the four datasets to evaluate \fedsgd \;and \fedavg. Because \fedmd \;requires a public dataset to share the knowledge of the clients, we evaluate it on paired datasets. Specifically, we select paired MNIST/FEMNIST and CIFAR-10/CIFAR-100. For MNIST/FEMNIST pairs, we select MNIST as the public data and a subset of the Federated Extended MNIST (FEMNIST)~\cite{caldas2018leaf} as the private data. For CIFAR-10/CIFAR-100 pairs, CIFAR-10 is selected as the public dataset, and the private dataset is a subset of CIFAR-100. 

\begin{table}[t!]
\caption{A summary of datasets used in the experiments.}
\label{tab::datasets}
\centering
\begin{tabular}{lllc}
\toprule
\bfseries Dataset & \bfseries \#Records & \bfseries \#Classes & \bfseries Dimension of records\\
\midrule
Synthetic & 100k & 10 & 60 \\

 MNIST~\tablefootnote{http://yann.lecun.com/exdb/mnist/} & 70k & 10 & 1x28x28 \\

 CIFAR-10~\tablefootnote{https://www.cs.toronto.edu/~kriz/cifar.html} & 60k & 10 & 3x32x32\\

 FEMNIST~\tablefootnote{https://github.com/TalwalkarLab/leaf} & 80k & 62 & 1x28x28 \\

 CIFAR-100~\tablefootnote{https://www.cs.toronto.edu/~kriz/cifar.html} & 60k & 100 & 3x32x32\\

 Purchase~\tablefootnote{https://www.kaggle.com/c/acquire-valued-shoppers-challenge/data} & 197.3k & 100 & 600\\
\bottomrule
\end{tabular}
\end{table}

\begin{table}[!t]
    \caption{The DNN architecture is used for classification tasks. (a) The CNN architecture is used for MNIST, and CIFAR-10. (b) The FC architecture is used for Synthetic and Purchase.}
    \label{tab::model}
    \begin{subtable}{.5\linewidth}
      \centering
        \caption{CNN architecture}
        \label{cnn}
        \begin{tabular}{l|l}
        \toprule
        Layer type & Size \\
        \midrule
        \multicolumn{2}{c}{Input} \\
        Convolution + ReLU & 5$\times$5$\times$32 \\
        
        Max Pooling & 2$\times$2 \\
       
        Convolution + ReLU & 5$\times$5$\times$64 \\
        
        Max Pooling & 2$\times$2 \\
       
        FC + ReLU & 512 \\
        
        FC + ReLU & 128 \\
        
        Activation & Softmax \\
        \multicolumn{2}{c}{Output} \\
        \bottomrule
        \end{tabular}
    \end{subtable}%
    \begin{subtable}{.5\linewidth}
      \centering
        \caption{MLP architecture}
        \label{fc}
        \begin{tabular}{l|l}
        \toprule
        Layer type & Size \\
        \midrule
        \multicolumn{2}{c}{Input}\\
        
        FC + ReLU & 200 \\
        
        Activation & Softmax \\
        
        \multicolumn{2}{c}{Output} \\
        \bottomrule
        \end{tabular}
    \end{subtable} 
\end{table}

\begin{figure*}[t!]
    \centering
  \subfloat[$\alpha=100$\label{1a}]{%
       \includegraphics[width=0.33\linewidth]{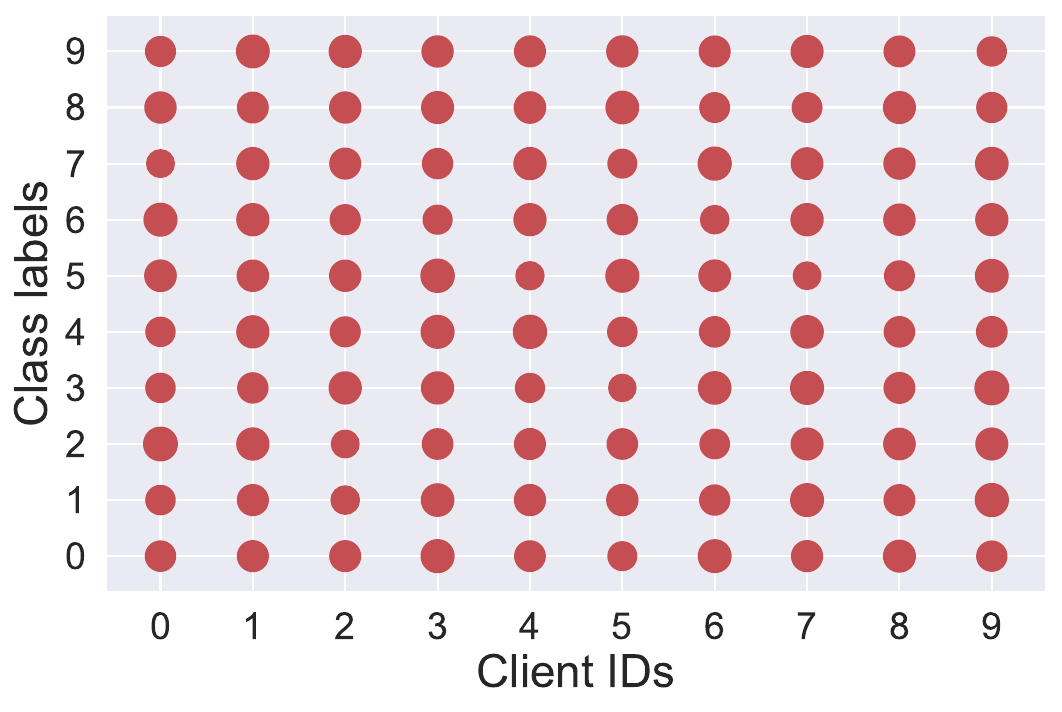}}
  \subfloat[$\alpha=1$\label{1b}]{%
        \includegraphics[width=0.33\linewidth]{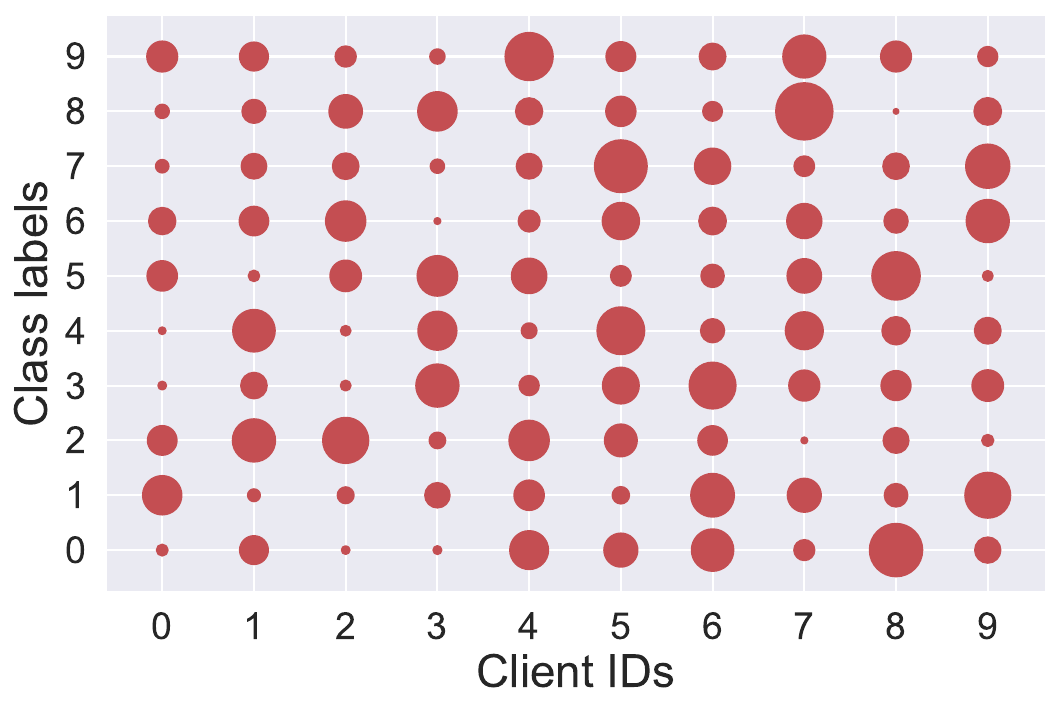}}
  \subfloat[$\alpha=0.1$\label{1c}]{%
        \includegraphics[width=0.33\linewidth]{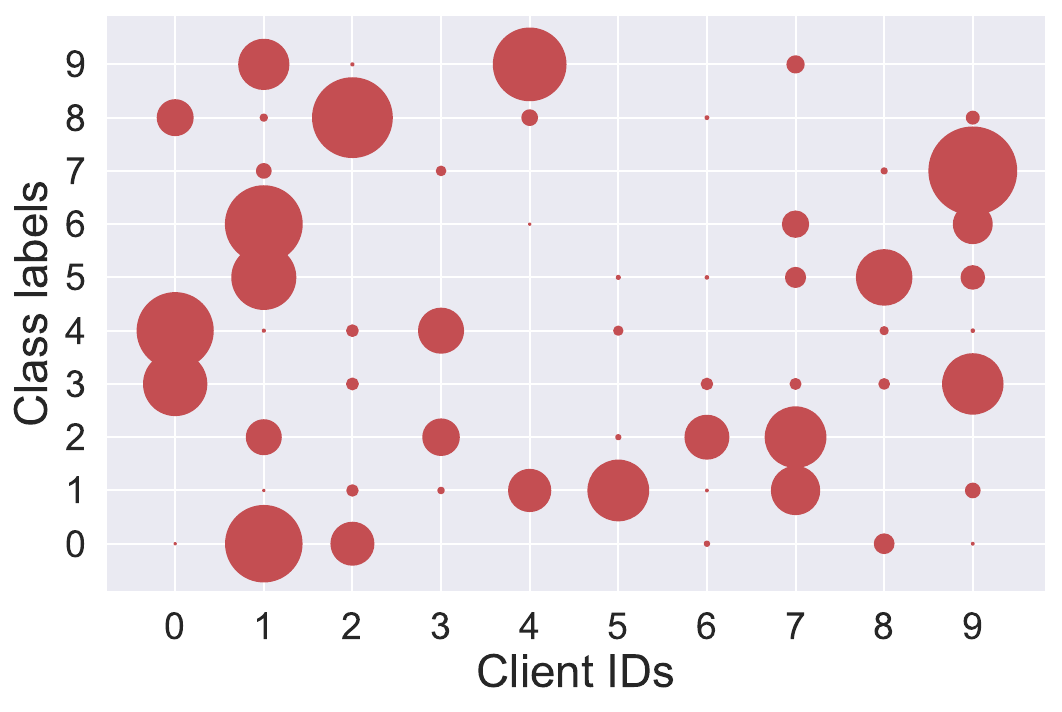}}
  \caption{Illustration of the number of samples per class allocated to each client at different Dirichlet distribution alpha values, for CIFAR-10 with 10 clients. The x-axis represents the IDs of the clients. The y-axis represents the class labels of CIFAR-10. The dot size reflects the number of samples allocated to the clients. As we can see, the data distribution across the clients becomes more and more non-IID as $\alpha$ decreases.}
  \label{dirichlet} 
\end{figure*}

\noindent \textbf{Models. \;} We use deep neural networks (DNNs) as the global models in FL for the classification tasks. Specifically, we use convolutional neural networks (CNN) for the image datasets of MNIST and CIFAR-10. For binary datasets of Synthetic and Purchase, we use fully-connected (FC) neural networks. The architecture of the CNN and FC classifiers used in \fedsgd \;and \fedavg \;are described in Table~\ref{tab::model}. Because \fedmd \;is designed for FL with heterogeneous architectures, we adopt the setting of the different CNN architectures of the clients in \cite{li2019fedmd}, which proposed the FedMD framework. For the detailed description of the CNN architectures of the clients in \fedmd, readers can refer to \cite{li2019fedmd}. The student model in \fedmd \;used for imitating the local models is listed in Table~\ref{cnn}. Note that the DNN architectures used in this paper do not necessarily achieve the best performance for the considered datasets in FL, because our goal is not to attack the best DNN architecture. In this paper, we aim to show that FL is vulnerable to SIAs. 

\begin{figure*}[t!]
    \centering
  \subfloat[\fedsgd \label{3a}]{%
       \includegraphics[width=0.33\linewidth]{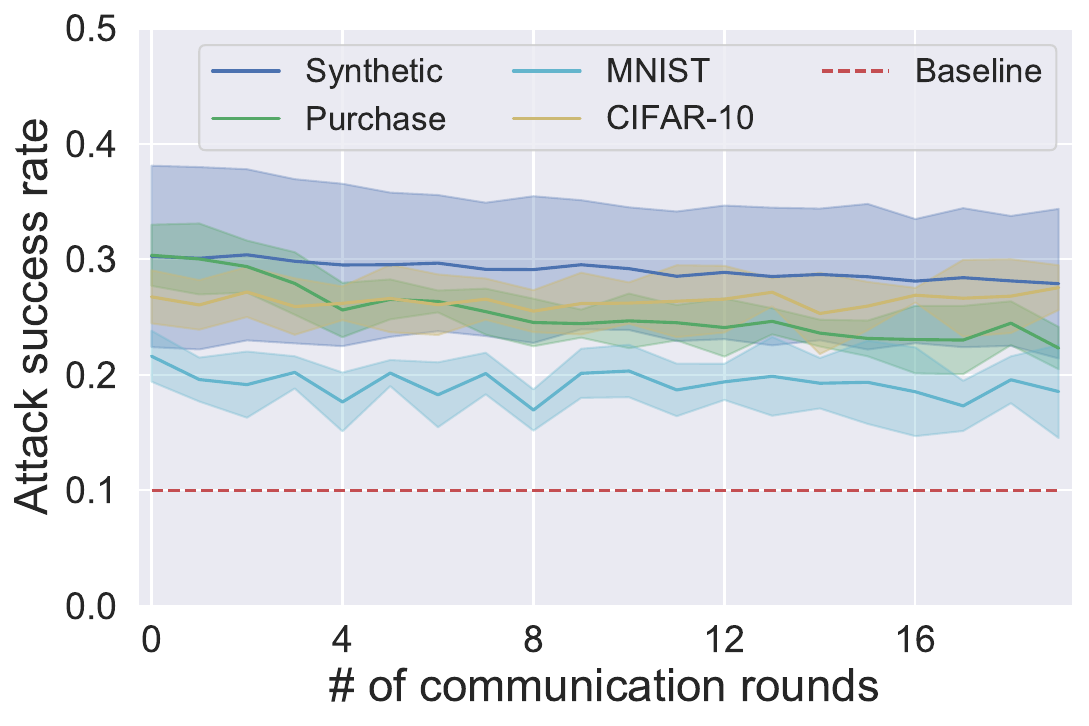}}    
  \subfloat[\fedavg \label{3b}]{%
        \includegraphics[width=0.33\linewidth]{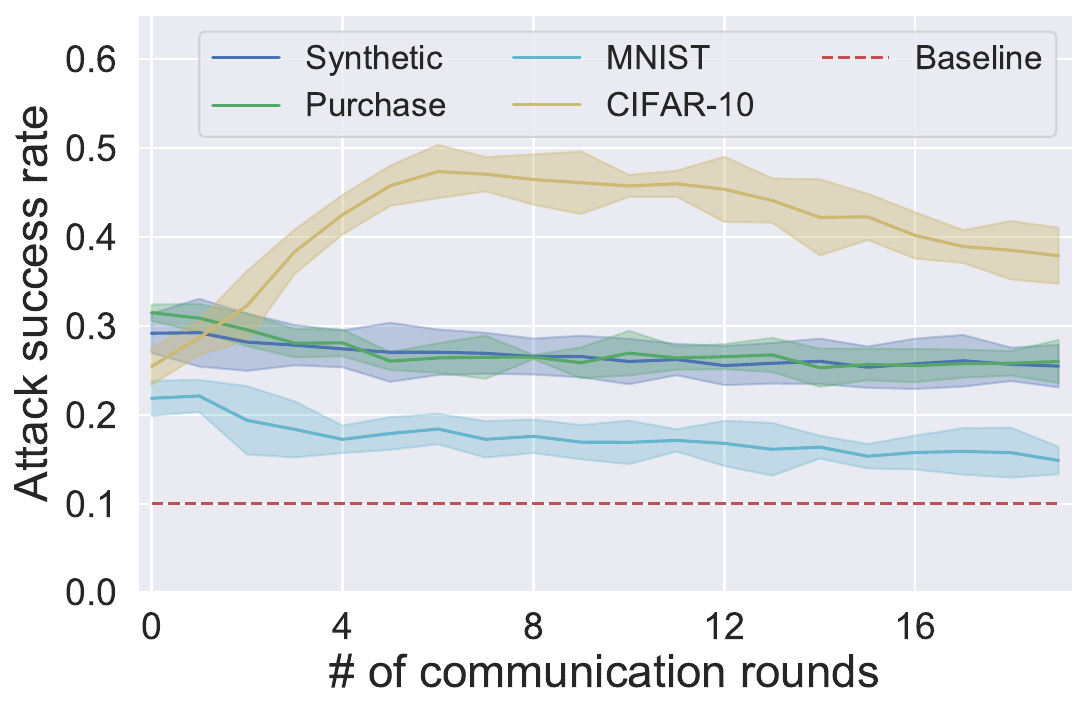}}
  \subfloat[\fedmd \label{3d}]{%
        \includegraphics[width=0.33\linewidth]{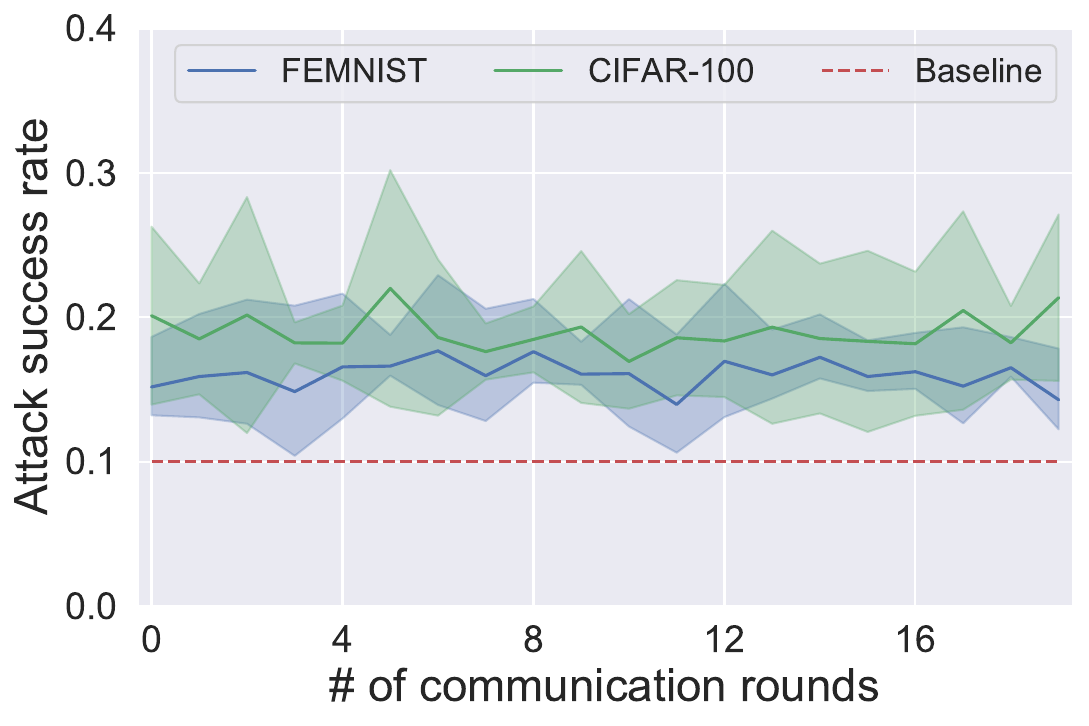}}        
  \caption{Attack success rate (ASR) of the three proposed FL frameworks during each communication round. In each plot, the $x$ axis represents the number of communication rounds, and the $y$ axis represents ASR. Each line is the mean ASR of 5 runs of experiments with shaded regions corresponding to 95\% confidence interval. (a) ASR of \fedsgd. (b) ASR of \fedavg. (c) ASR of \fedmd. As we can see, the ASR on all datasets is larger than the baseline, i.e., randomly guessing, demonstrating the effectiveness of SIAs.}
  \label{fig::asr_comm}
\end{figure*}

\subsection{Evaluation Metrics, Baseline, and Parameter Settings}

\noindent \textbf{Metric. \;} We use \textit{attack success rate} (ASR) to evaluate SIAs, which is the most commonly used metric to evaluate the performance of a given attack approach. {ASR \cite{hu2022membership} is defined as the fraction of the number of attacks that successfully identify the source of the target instances:}
\begin{equation*}
    \textrm{ASR} = \frac{\textrm{\# Successful attacks}}{\textrm{\# All attacks}}.
\end{equation*}

\noindent {For example, consider there are 100 target instances for identifying their source clients, and SIAs successfully identify 60 instances' source clients. Then, the ASR is calculated as 60\%.}

\noindent \textbf{Baseline. \;} Because we are the first to propose SIAs, there is no current work we can compare. Thus, to demonstrate the effectiveness of SIAs, we consider a trivial attack of \textit{randomly guessing} as the baseline of SIAs. Randomly guessing randomly selects a client as the source of the target training record. The ASR of randomly guessing is defined as $\frac{1}{K}$, where $K$ is the number of clients.

\noindent \textbf{Hyper-parameter setting. \;} We use the SGD optimizer for all the models with a learning rate of $0.01$. We assume there are $10$ clients in the FL. In each trial of the experiment, $100$ training records from each client are selected as the target records for SIAs. For all the learning tasks, we set the total number of communication rounds to 20, which is enough for the global model to converge. During each communication round, we record the ASR of SIAs. We report the mean of ASR over five different random seeds.

\subsection{Factors in Source Inference Attacks}

\noindent\textbf{Data distribution $\alpha$. \;} In FL, the training data across the clients are usually non-IID~\cite{li2020federated}. This means the local data from one client can not be considered as samples drawn from the overall data distribution. To simulate the heterogeneity distribution of client data, we leverage a Dirichlet distribution as in previous works~\cite{xie2019dba,bagdasaryan2020backdoor,lin2020ensemble,yurochkin2019bayesian,hsu2019measuring} to create disjoint non-IID training data for each client. The degree of non-IID is controlled by the value of $\alpha$ ($\alpha>0$) of the Dirichlet distribution. For example, $\alpha=100$ imitates almost identical local data distributions. With a smaller $\alpha$, each client is more likely to have the training records from only one class. We use Fig.~\ref{dirichlet} to visualize how the training records of CIFAR-10 are distributed among 10 clients when setting different $\alpha$ values.

\noindent\textbf{Number of local epochs $E$. \;} In \fedavg \;and \fedmd,\; each client can update their model for several epochs, and then send the model weights or predictions to the server. Many recent studies~\cite{song2017machine,carlini2019secret,murakonda2020ml,zhang2021understanding} have shown that DNN models can easily memorize their training data. Intuitively, if a client trains the local model with more epochs, the local updated model should better remember the information of the local dataset and be more confident to classify its training records. Accordingly, the prediction loss of a target record calculated from the update of the source client will be much smaller than that calculated from the updates of other clients, which will be beneficial for SIAs.

\subsection{Effectiveness of Source Inference Attacks}
To demonstrate the effectiveness of SIAs, we evaluate \fedsgd, \fedavg, and \fedmd \;under common settings: We divide the training data to the clients in the three FL frameworks with $\alpha=1$, resulting in a moderate non-IID distribution. We set $E=5$ for \fedavg, and $E_1=1, E_2=5$ for \fedmd, \;which are the common settings for FedAvg~\cite{mcmahan2017communication} and FedMD~\cite{li2019fedmd}. 

Fig.~\ref{fig::asr_comm} shows the ASR of \fedsgd, \fedavg, and \fedmd \;during each communication round. We can observe from Fig.~\ref{fig::asr_comm} that the ASR on all datasets is larger than the baseline of 10\% in each communication round, demonstrating the effectiveness of SIAs. {This indicates that a semi-honest server can steal significant source information of the training data records via our proposed SIAs in any communication rounds during federated training.} We can also see that the ASR of each FL framework differs in different datasets. This is because the local models are overfitted with a different level to their local training data. We will discuss this phenomenon in more detail in subsection~\ref{sec::why}.

\begin{mdframed}[backgroundcolor=white!10,rightline=true,leftline=true,topline=true,bottomline=true,roundcorner=2mm,everyline=true]
\textbf{Takeaway 1~}
Our proposed SIAs are effective on the FedSGD, FedAvg, and FedMD.
\end{mdframed}

\subsection{Ablation Study}
We conduct a detailed ablation study on \fedsgd, \fedavg, and \fedmd \;to learn how data distribution and local epochs influence the effectiveness of SIAs. We record and report the highest ASR during the federated training process. Table~\ref{tab::ablation_study} shows the ASR when setting different levels of non-IID data distribution and different number of local epochs.

\begin{table*}[!t]
	\caption{Understanding the impact of data distribution and local epochs in SIAs. For each value of the parameter, we report the averaged attack success rate over 5 different random seeds with its standard deviation. Because \fedsgd \;transmits gradient calculated on the current global model (equivalent to $E=1$) and does not involve training local models for several epochs, we leave the column $E=5$ and $E=10$ of the ASR of \fedsgd \;blank.}
	\label{tab::ablation_study}
	\centering
	\resizebox{1.\textwidth}{!}{%
		\begin{tabular}{lclllllllll}
			\toprule
			& 				   								& \multicolumn{9}{c}{The attack success rate (\%) of source inference attacks}                                                     \\ \cmidrule(lr){3-11}
			&  {\centering Datasets}  	& \multicolumn{3}{c}{$\alpha \!=\! 100$} 	& \multicolumn{3}{c}{$\alpha \!=\! 1$}	& \multicolumn{3}{c}{$\alpha \!=\! 0.1$}	\\ \cmidrule(lr){3-5} \cmidrule(lr){6-8} \cmidrule(lr){9-11}
			                        &      & $E \!=\! 1$ & $E \!=\! 5$ & $E \!=\! 10$ & $E \!=\! 1$ & $E \!=\! 5$ & $E \!=\! 10$ & $E \!=\! 1$ & $E \!=\!5$ & $E \!=\!10$\\ 
			                        \midrule
\multirow{4}{*}{\fedsgd}   
&  Synthetic & $ 19.1 \pm 0.4 $  & --- & --- & $30.9 \pm 2.6$ & --- & --- & $55.9 \pm 3.2$ & --- & ---\\
& Purchase & $ 15.7 \pm 0.4 $   & --- & --- & $30.6 \pm 1.0$ & --- & --- & $63.9 \pm 1.6$ & --- & --- \\ 
& MNIST & $ 12.7 \pm 0.3 $ & --- & --- & $23.1 \pm 0.5$ & --- & --- & $50.2 \pm 3.7$ & --- & ---               \\ 
& CIFAR-10 & $ 17.6 \pm  0.3$ & --- & --- & $28.5 \pm 0.7$ & --- & --- & $58.3 \pm 5.2$ & --- & ---             \\
			                 \midrule
\multirow{4}{*}{\fedavg}    
& Synthetic & $ 19.2 \pm 0.5$   & $ 19.7 \pm 0.5 $ & $ 18.9 \pm 0.6$ & $ 28.5 \pm 1.4$  & $ 28.1 \pm 1.8$ & $28.5 \pm 1.2$ & $53.6 \pm 1.3 $ & $50.8 \pm 2.6$ & $51.7 \pm 3.3$   \\
& Purchase & $15.6 \pm 0.5$ & $ 21.9 \pm 0.3$ & $ 28.2 \pm 0.5$ & $31.4 \pm 0.8$ & $32.6 \pm 0.7$ & $ 34.8 \pm 0.5$ & $67.1 \pm 0.4$ & $64.4 \pm 0.8 $ & $66.2 \pm 0.9$ \\ 
& MNIST & $ 12.1 \pm 0.1$  & $ 12.8 \pm 0.3$ & $ 13.5 \pm 0.3$ & $ 23.7 \pm 0.9$ & $23.3 \pm 0.4$ & $ 22.1 \pm 0.7$ & $58.4 \pm 4.9$ & $53.1 \pm 1.1$ & $42.3 \pm 2.6$ \\
& CIFAR-10 & $ 16.6 \pm 0.2$  & $47.8 \pm 0.4$  & $51.1 \pm 1.1$ & $ 26.3 \pm 0.7$ & $ 49.9 \pm 0.7$  & $ 55.8 \pm 0.7 $ & $56.8 \pm 3.9$ & $60.9 \pm 3.9$ & $62.5 \pm 1.9$ \\ 
                            \midrule
\multirow{2}{*}{\fedmd} 
& FEMNIST & $ 15.1 \pm 0.4$ & $17.2 \pm 0.5$ & $17.5 \pm 0.6$ & $23.2 \pm 1.1$ & $25.4 \pm 1.3$ & $24.5 \pm 1.1$ & $42.5 \pm 3.3$ & $40.6 \pm 1.0$ & $46.7 \pm 2.1$  \\
& CIFAR-100 & $18.2 \pm 0.2$ & $ 18.8 \pm 0.8$ & $20.5 \pm 0.4$ & $23.9 \pm 1.1$ & $28.1 \pm 1.9$ & $25.5 \pm 0.6$ & $40.3 \pm 1.7$ & $43.5 \pm 3.8$ & $45.6 \pm 3.9$ \\
			\bottomrule
		\end{tabular}%
	}

\end{table*}

\noindent \textbf{Evaluation of non-IID data distribution. \;} As we can see in Table~\ref{tab::ablation_study}, the ASR of all the three FL frameworks on all datasets increase when the degree of non-IID data distribution across clients increases. For example, the ASR of \fedsgd\;on CIFAR-10 increases from 17.6\% to 58.3\% when the non-IID data distribution increase from $\alpha=100$ to $\alpha=0.1$. This is because the more non-IID of the local data is, the more different the updated local models will be, which benefits SIAs. For example, for the CIFAR-10 task, a client is highly likely to have training records from only one class ({\em e.g.,} deer) when the degree of the non-IID data distribution is high. It is expected that such a client's local model will perform well in predicting its own records of deer images but perform badly in predicting the other clients' records such as dogs and trucks because the local model had never seen such images during the update process. Thus, the local model will have very small prediction losses on its own records and large losses on the other records. The distinguishable prediction losses across different local models of the clients enable the server to easily implement SIAs to infer where a training record comes from. 

\noindent \textbf{Evaluation of local epochs. \;} As shown in most scenarios in Table~\ref{tab::ablation_study}, the increase of $E$ from 1 to 10 makes the ASR of \fedsgd \;and \fedmd \;increase. For example, the ASR of \fedavg\;on CIFAR-10 increases from 16.6\% to 51.1\% when the local epochs increase from 1 to 10 when $\alpha$ is set to 100. This is because the more epochs the clients update, the more confident the local model is in predicting its training data. However, we also observe that there are scenarios, {\em e.g.,} \fedavg \;on MNIST when $\alpha=1$ and $\alpha=0.1$, that increasing $E$ does not lead to the increase of ASR but leads to a decrease. We suspect this is because training the local model with more epochs not only makes it more confident to predict its training records but also helps it to generalize better to other clients' data. In this case, the prediction losses across different local models will become less distinguishable, which leads to a decrease in ASR. We will further explain how $E$ influences ASR from the perspective of overfitting in the following section.

\begin{mdframed}[backgroundcolor=white!10,rightline=true,leftline=true,topline=true,bottomline=true,roundcorner=2mm,everyline=true]
\textbf{Takeaway 2~}
\begin{itemize}
    \item Higher data heterogeneity among local clients results in more effective SIAs.
    \item Larger local epochs in clients usually lead to more effective SIAs.
\end{itemize}
\end{mdframed}

\begin{figure*}[t!]
    \centering
  \subfloat[\fedsgd \label{4a}]{%
       \includegraphics[width=0.34\linewidth]{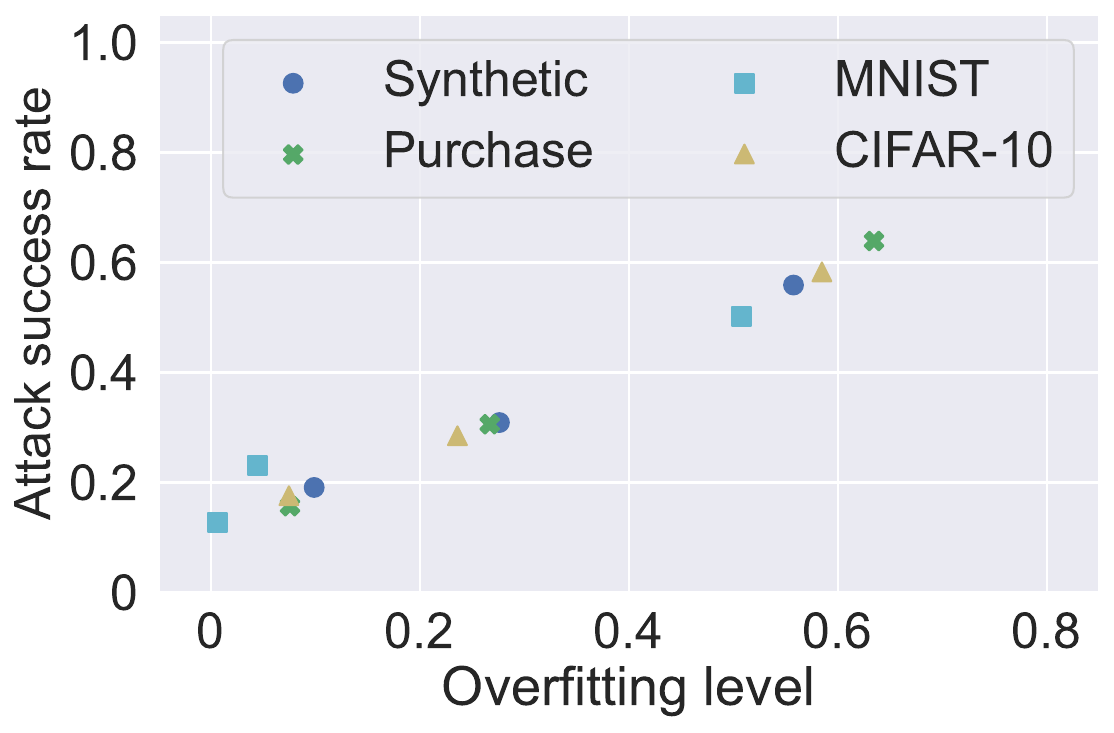}}    
  \subfloat[\fedavg \label{4b}]{%
        \includegraphics[width=0.34\linewidth]{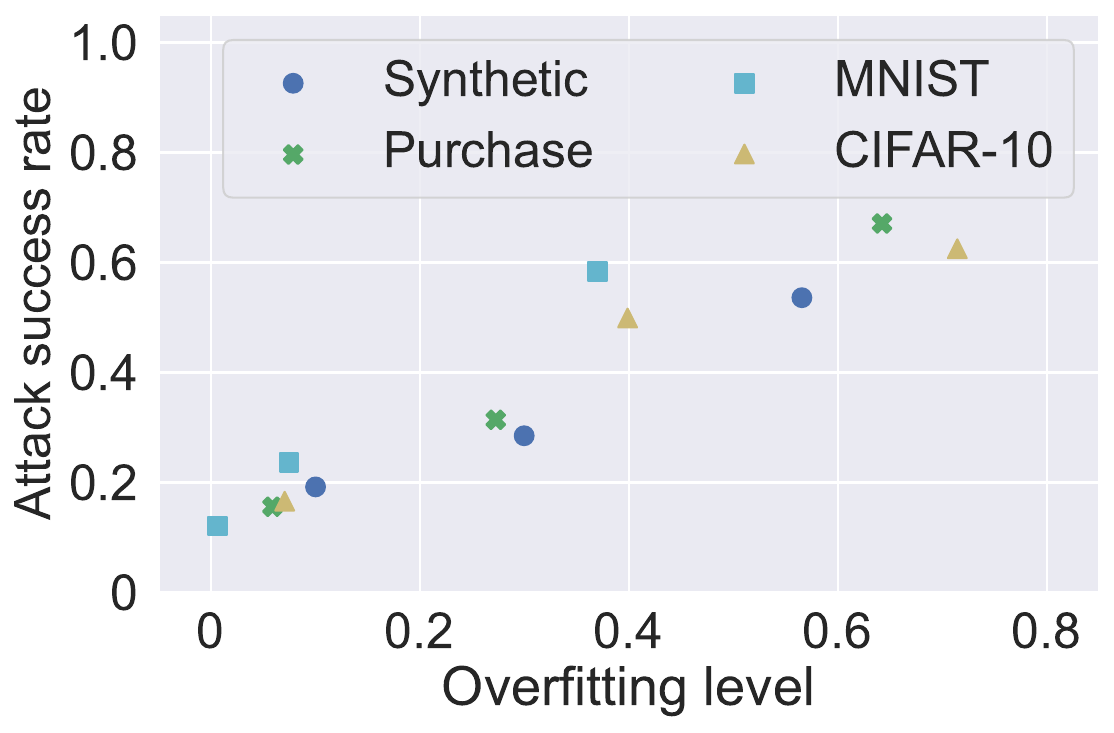}}
  \subfloat[\fedmd \label{4c}]{%
        \includegraphics[width=0.34\linewidth]{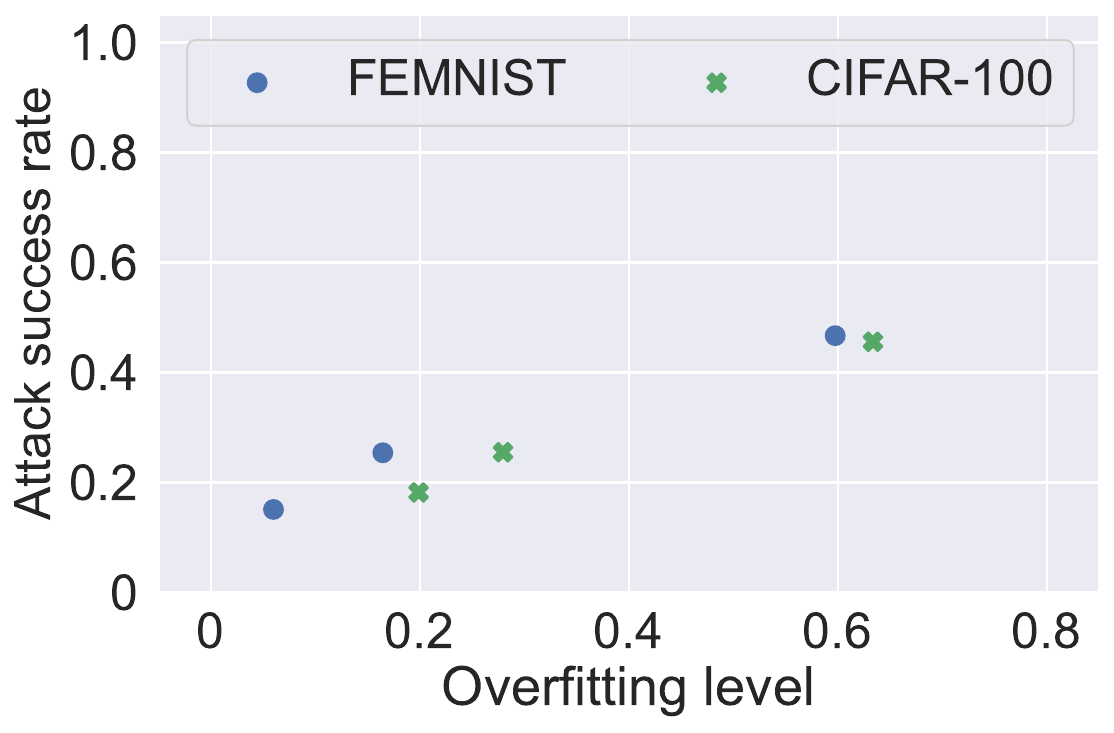}}
  \caption{Understanding the impact of overfitting in SIAs. In each plot, the $x$ axis represents the overfitting level of the FL framework, and the $y$ axis represents ASR. As we can see, for all the datasets in all the three FL frameworks, the more overfitted the local models are, the higher the ASR will be.}
  \label{fig::asr_overfit}
\end{figure*}

\begin{figure*}[t!]
    \centering
  \subfloat[\fedsgd \label{5a}]{%
       \includegraphics[width=0.34\linewidth]{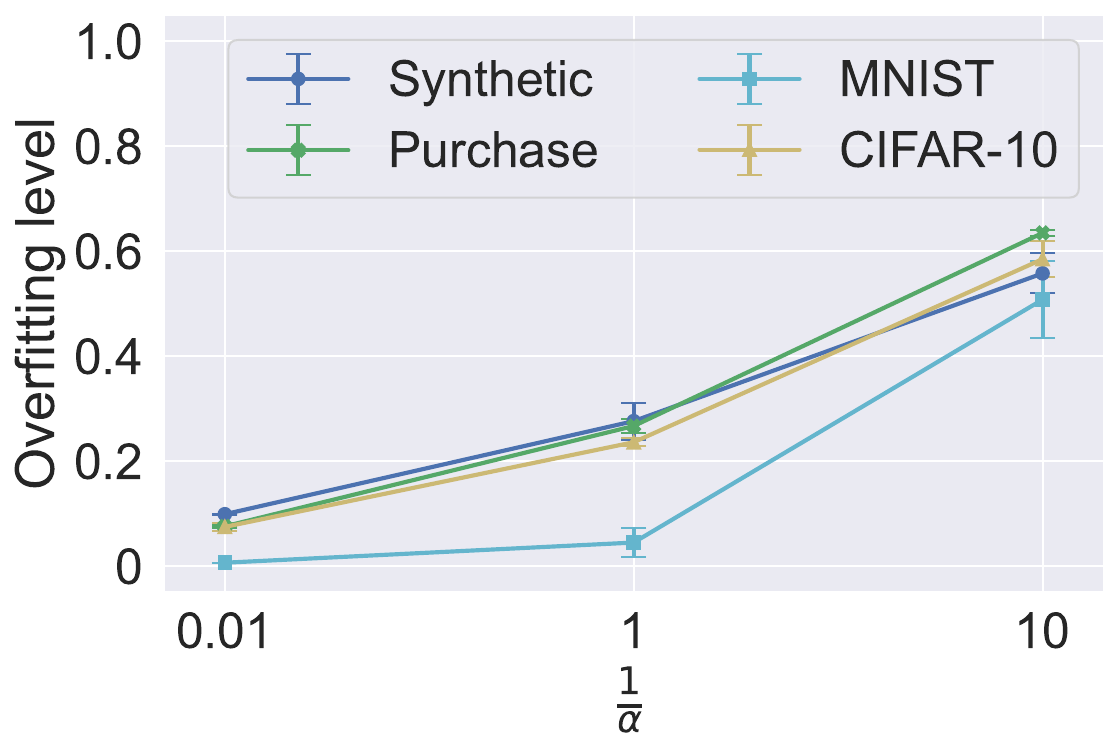}}
  \subfloat[\fedavg \label{5b}]{%
        \includegraphics[width=0.34\linewidth]{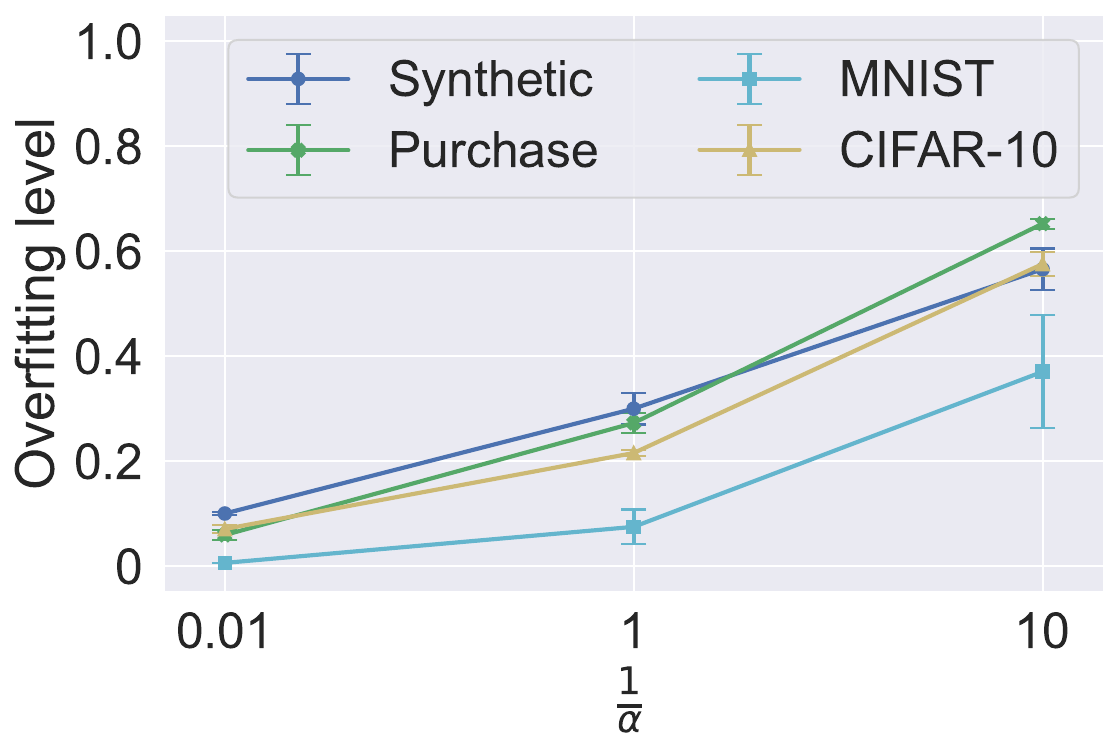}}
  \subfloat[\fedmd \label{5c}]{%
        \includegraphics[width=0.34\linewidth]{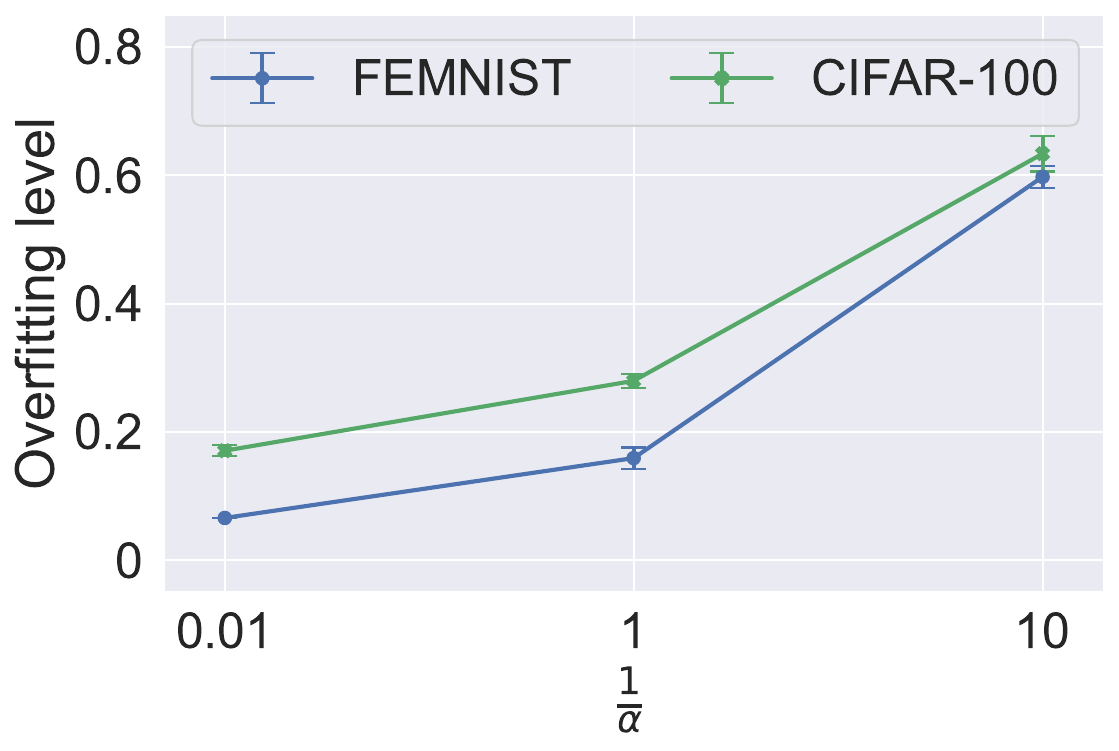}}
  \caption{Understanding the impact of the non-IID data distribution on the overfitting level of the FL framework. In the three FL frameworks, the local epoch sets to 1. In each plot, the $x$ axis represents the inverse of $\alpha$, and the $y$ axis represents the overfitting level.}
  \label{fig::overfit_noniid}
\end{figure*}

\begin{figure*}[t!]
    \centering
  \subfloat[\fedavg \label{6a}]{%
       \includegraphics[width=0.35\linewidth]{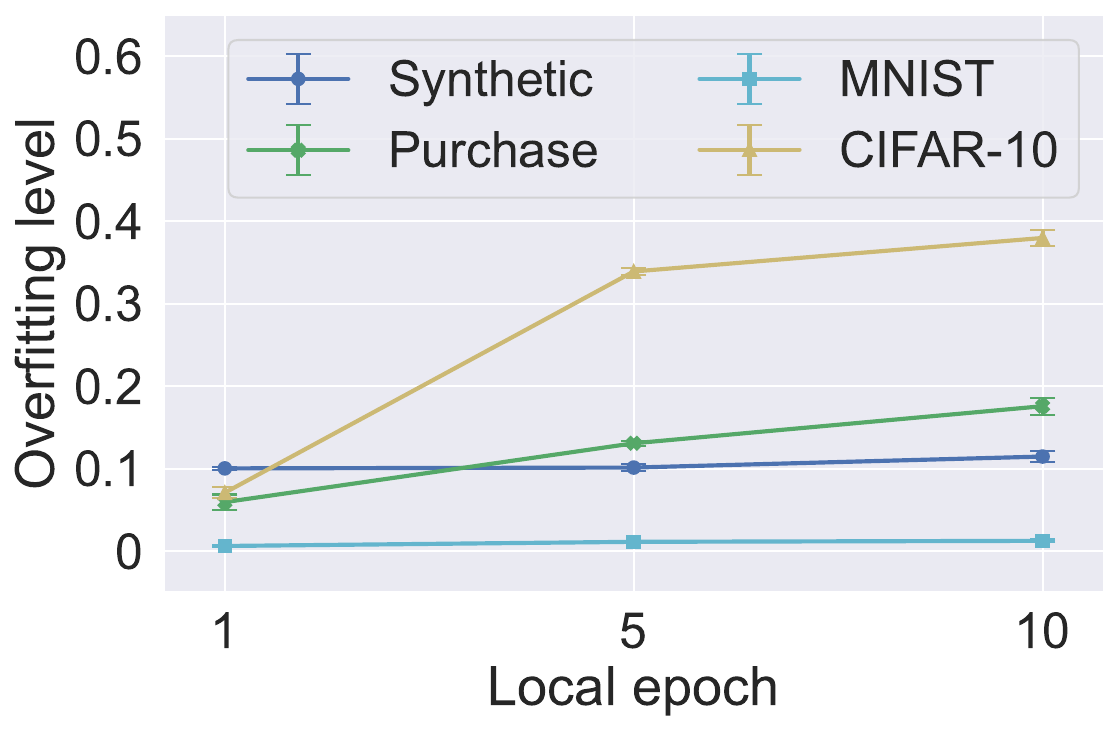}}
  \hspace{40pt}
  \subfloat[\fedmd \label{6b}]{%
        \includegraphics[width=0.35\linewidth]{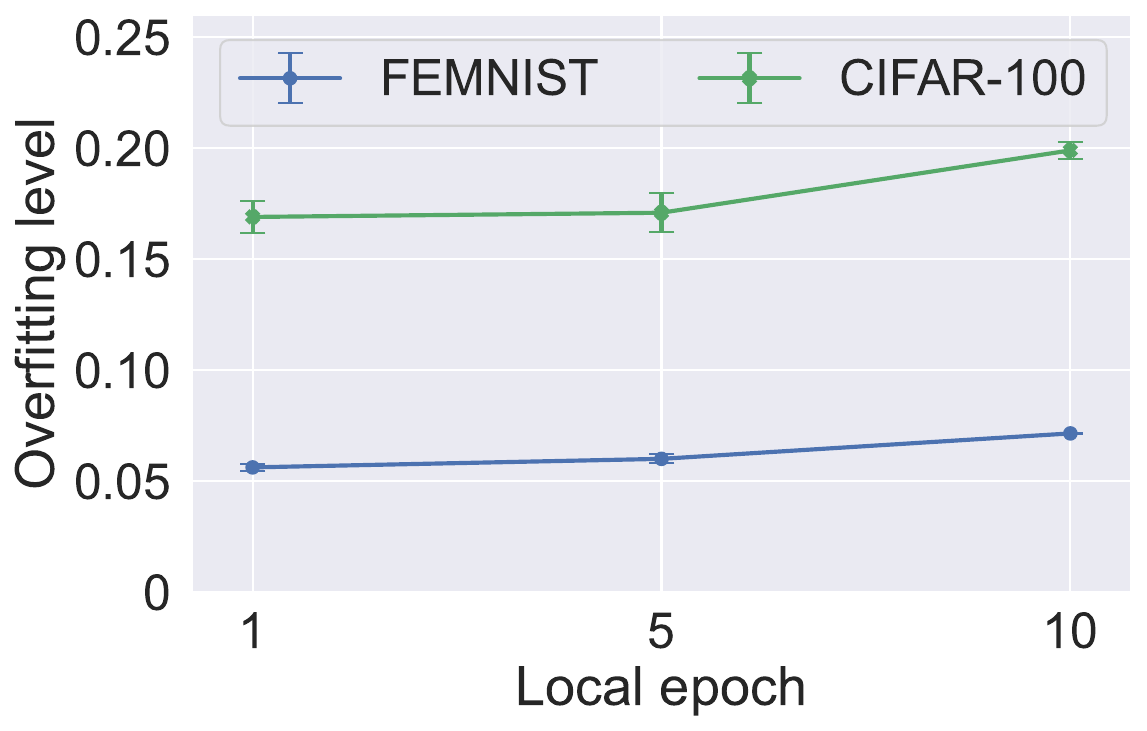}}
  \caption{Understanding the impact of the local epoch on the overfitting level of the FL framework. In the three FL frameworks, the data distribution value $\alpha$ sets to 100. In each plot, the $x$ axis represents the number of local epoch, and the $y$ axis represents the overfitting level.}
  \label{fig::overfit_epoch}
\end{figure*}

\begin{table*}[!t]
	\caption{The evaluation of source inference defenses via differential privacy. In \fedsgd \;and \fedavg, \;the training accuracy and testing accuracy is the accuracy of the trained global model. In \fedmd, \;because there is no global model, the training accuracy and testing accuracy is the averaged accuracy of the local models. The privacy budget is calculated every communication round.}
	\label{tab::defense}
	\centering
	\resizebox{1.\textwidth}{!}{%
		\begin{tabular}{lcllllllc}
			\toprule
			&  \multirow{2}[3]{*}{Datasets}  & \multicolumn{3}{c}{FL without differential privacy} 	& \multicolumn{4}{c}{FL with differential privacy}	\\ \cmidrule(lr){3-5}  \cmidrule(lr){6-9}
			                        &   & Training acc.(\%) & Testing acc.(\%) & ASR (\%) & Training acc.(\%) & Testing acc.(\%) & ASR (\%) & Privacy budgets  \\ 
			                        \midrule
\multirow{4}{*}{\fedsgd}   
&  Synthetic   & $84.3 \pm 0.5$ & $83.9 \pm 0.4$ & $30.9 \pm 2.6$ & $34.9 \pm 1.3$ & $34.7 \pm 1.1$ & $22.5 \pm 0.7$ & 1.8 \\
& Purchase     & $87.8 \pm 0.1$ & $85.9 \pm 0.2$ & $30.6 \pm 1.0$ & $14.6 \pm 0.3$ & $14.2 \pm 0.1$ & $23.2 \pm 0.6$ & 2.1  \\ 
& MNIST        & $99.3 \pm 0.1$ & $99.1 \pm 0.1$ & $23.1 \pm 0.5$ & $35.7 \pm 0.7$ & $32.1 \pm 0.5$ & $ 15.8 \pm 1.9$ & 1.9  \\ 
& CIFAR-10     & $67.8 \pm 0.1$ & $64.5 \pm 0.2$ & $28.5 \pm 0.7$ & $13.8 \pm 0.5$ & $13.7 \pm 0.3$ & $22.1 \pm 1.2$ & 1.8  \\
			                 \midrule
\multirow{4}{*}{\fedavg}    
&  Synthetic  & $94.2 \pm 0.4$ & $93.8 \pm 0.4$ & $28.5 \pm 1.2$ & $51.7 \pm 1.5$ & $51.2 \pm 0.9$ & $21.4 \pm 1.3$ & 1.9 \\
& Purchase    & $94.2 \pm 0.1 $ & $89.5 \pm 0.1$ & $34.8 \pm 0.5$ & $33.6 \pm 1.8$ & $32.1 \pm 1.3$ & $ 24.8 \pm 1.2$ & 6.8  \\ 
& MNIST       & $99.7 \pm 0.1$ & $99.2 \pm 0.1$ & $22.1 \pm 0.7$ & $10.9 \pm 0.1$ & $9.9 \pm 0.1$ & $11.8 \pm 0.5$ & 3.2               \\ 
& CIFAR-10    & $96.3 \pm 0.6$ & $70.1 \pm 0.5$ & $55.8 \pm 0.7$ & $10.1 \pm 0.1$ & $9.9 \pm 0.1$ & $ 19.5 \pm 0.6$ & 2.8             \\
                            \midrule
\multirow{2}{*}{\fedmd} 
& FEMNIST & $99.7 \pm 0.1$ & $84.1 \pm 1.5$ & $ 24.5 \pm 1.1$ & $ 16.7 \pm 0.1$ & $ 16.5 \pm 0.1$ & $16.3 \pm 0.3$ & 4.8   \\
& CIFAR-100 & $99.5 \pm 0.1$ & $72.1 \pm 0.9$ & $ 25.5 \pm 0.6$ & $ 16.6 \pm 0.1$ & $ 16.2 \pm 0.1$ & $17.1 \pm 0.5$ & 9.6 \\
			               \bottomrule
		\end{tabular}%
	}
\end{table*}

\subsection{Why Source Inference Attacks Work} \label{sec::why}
Machine Learning models especially DNNs are usually overparameterized with high complexity. This enables DNNs to learn patterns effectively from the training data on one side while on the other side such models can have unnecessarily high capacities to memorize the details of the training data~\cite{carlini2019secret,song2017machine,zhang2021understanding}, which can lead to the overfitting of ML models. An overfitted ML model cannot generalize well on its test data, {i.e.,} the model performs much better on its training data than test data. In FL, the local dataset of a client often has a limited number of records and fails to represent the whole data distribution, which exacerbates the overfitting of local models. An overfitted local model of a client is expected to have a much smaller prediction loss on a training record of its own than the loss of a training record of other clients. The distinguishable prediction losses of the local models of the different clients enable our proposed SIAs to work effectively. Moreover, the more overfitted the local models are, the more effective the SIAs will be.

We use generalization error~\cite{hardt2016train} to measure the overfitting level of the local models, which is a widely used metric to quantify overfitting in existing works~\cite{hu2022membership}. The generalization error of an ML model is defined as the absolute difference between the training accuracy and the testing accuracy of the model. Here, we first calculate the generalization error for each of the local models by using its local training dataset and the global testing dataset. Then, we calculate the averaged generalization error of the local models to reflect the overfitting level of the FL system.

\noindent \textbf{The impact of overfitting on ASR. \;} Fig.~\ref{fig::asr_overfit} shows the impact of overfitting on the performance of SIAs. As we can see, for \fedsgd, \fedavg, \;and \fedmd, the honest-but-curious server can obtain a higher attack success rate when the FL system is more overfitted to the training dataset. This observation validates our analysis of the relationship between the overfitting level and the performance of SIAs.

\noindent \textbf{The impact of non-IID distribution on overfitting. \;} Fig.~\ref{fig::overfit_noniid} examines the effect of different levels of non-IID data distribution on the overfitting level of the FL system. As we can see, for all the FL systems, increasing the level of non-IID data across the clients ({i.e.,} decreasing $\alpha$ from $100$ to $10$ and $0.1$) will inevitably increase the overfitting level. This is because the more non-IID data is, the less representative the local data will be, which makes the local model less possible to generalize well beyond its own training data.

\noindent \textbf{The impact of local epoch on overfitting. \;} Fig~\ref{fig::overfit_epoch} shows the impact of the number of epochs on the overfitting level of \fedavg \;and \fedmd. We can see that in \fedmd,\;increasing the number of local epochs does not increase the overfitting level of the FL system too much, which results in a slight increase of the attack success rate as we can observe in Table~\ref{tab::ablation_study}. In \fedavg, \;we observe that there is a large increase of the overfitting level on CIFAR-10 when we increase $E$ from 1 to 10, while there is no such trend for the other datasets. This observation explains why the ASR of CIAFR-10 is more sensitive to the local epoch compared to other datasets in Table~\ref{tab::ablation_study}.

In summary, the overfitting of the local models directly contributes to the success of SIAs, while overfitting is mainly caused by the non-IID data distribution across the clients. In FL, if the local dataset fails to represent the overall data distribution, the local model can easily overfit to the local training dataset, as depicted in Fig.~\ref{fig::overfit_noniid}. Because an overfitted local model cannot generalize well to the data beyond its local training records, it will behave differently on its training data from other clients' data, which guarantees the feasibility of SIAs. Many recent works~\cite{zhao2018federated,li2019convergence,li2020federated,li2020fedprox,lin2020ensemble} have shown that the non-IID data distribution has brought important challenges to FL such as model convergence guarantees. In this paper, we demonstrate another challenge of non-IID from the perspective of privacy: The leakage of source privacy about the training data.

{There are several promising applications where SIAs can be applied. First, as a newly proposed inference attack, SIAs can be leveraged to evaluate the privacy-preserving ability of an FL framework. A strong privacy-aware FL framework should guarantee the source privacy of the clients. Second, because no defense methods have been specifically proposed for mitigating SIAs, they can inspire researchers to propose novel defense methods or design new FL frameworks for protecting clients' source privacy. Last, we have identified that the overfitting of an FL framework is the main success factor for SIAs. Thus, SIAs can be used to help evaluate and understand the overfitting phenomenon of FL frameworks. For example, SIAs can be used to determine how many local epochs a local model should update to avoid overfitting.}

\begin{mdframed}[backgroundcolor=white!10,rightline=true,leftline=true,topline=true,bottomline=true,roundcorner=2mm,everyline=true]
\textbf{Takeaway 3~}
\begin{itemize}
    \item The overfitting of local models is the main reason why SIAs can succeed.
    \item Higher data heterogeneity results in a higher overfitting level of local models.
\end{itemize}

\end{mdframed}

\section{Discussion and Future Work}\label{sec::05}
\subsection{Defenses against SIAs}
\noindent \textbf{Differential privacy. \;} As a probabilistic privacy mechanism, differential privacy (DP)~\cite{dwork2006calibrating} provides a mathematically provable privacy guarantee. Recently, many works ~\cite{shokri2015privacy,geyer2017differentially,mcmahan2018learning,naseri2020toward,rahman2018membership} suggest DP can be applied to ML models to defend against inference attacks such as membership inference attacks and property inference attacks. When an ML model is trained with differential privacy guarantees, the learned model is expected not to learn or remember any specific data details. By definition, if the local models in FL are differentially private, the success probability of SIAs should be reduced because the communication updates calculated from such models should contain less information about the local training datasets. We discuss and evaluate whether DP can effectively mitigate SIAs.

In the experiments, we set $\alpha=1$ for all three FL frameworks and set $E=10$ for \fedavg \;and \fedmd. Here, we consider record-level DP implemented with DP-SGD~\cite{abadi2016deep}, which is the first and the most widely used differentially private training method. We train differentially private local models in each communication round before sending the updates to the central server. In our experiments, we fine-tune the noise of DP to obtain an attack success rate slightly larger than the baseline of random guess (i.e., 10\% because of 10 clients) while reporting the training and testing accuracy of the FL system.

Table~\ref{tab::defense} compares the FL systems with DP to the FL systems without DP in terms of the model utility (measured by the testing accuracy) and the attack success rate of SIAs.  As we can see, differential privacy indeed reduces the ASR of \fedsgd, \fedavg,\;and \fedmd \;on all the datasets. However, there is a significant model utility reduction in the FL systems. In some cases, the global model does not converge, {e.g.,} \fedavg\;trained on MNIST and CIFAR-10. We are aware that there exists client-level DP \cite{geyer2017differentially,mcmahan2017learning} in FL where the local clients' privacy is preserved while the utility of the FL system is maintained. However, client-level DP requires a very large number of clients (e.g., thousands in \cite{mcmahan2017learning}) to achieve the desired privacy-utility guarantee. Applying them in our experiments is not expected to achieve satisfactory results as there are only ten clients in the FL system. 

\noindent \textbf{Regularization techniques. \;} Regularization techniques such as L2-regularization and Dropout \cite{srivastava2014dropout} are leveraged to defend against membership inference attacks on machine learning models \cite{hu2022membership}. Intuitively, regularization techniques can help local models to reduce their overfitting degrees to the local training datasets. Thus, it is promising that we can leverage regularization techniques to defend against SIAs in FL. However, regularization techniques are not perfect as not all of them can achieve a satisfactory trade-off between privacy and model utility: strong regularization can also significantly reduce model performance \cite{shokri2017membership}. We leave the investigation of finding appropriate regularization techniques as a defense against SIAs for our future work.

\subsection{Limitations and Future Work for SIAs}
\noindent \textbf{Number of clients. \;} In our experiments, to demonstrate the effectiveness of SIAs, we evaluate the FL systems with ten clients, which is a small number. This FL scenario corresponds to federated to business mode where a handful of organizations jointly build a useful model, e.g., a small number of banks collaboratively build a fraud detection model \cite{lyu2022privacy,yang2019federated}. There also exists federated to customer mode where thousands or even millions of clients involved in FL systems, e.g., thousands of mobile devices jointly train a model for next-word prediction \cite{mcmahan2018learning}. Intuitively, SIAs are much more challenging under federated to customer mode because finding the source of a target record from thousands of clients is difficult. Under this setting, the performance of our proposed SIAs is expected to drop while still performing better than randomly guessing as long as the local client’s model’s behavior on its local training data is different from that of other clients (see discussion in Section \ref{sec::ana_sias}). 

\noindent \textbf{Model size.} As this is the first paper that investigates the source privacy leakage in FL, we only evaluate the FL system with relatively small deep learning models such as CNN models with only two convolutional layers but have not evaluated large models. This is because the purpose of our experiments is to show SIAs are effective in the FL frameworks of FedSGD, FedAvg, and FedMD but not to attack the best deep models. However, it would be interesting to investigate SIAs in FL with large models of millions of parameters such as VGG \cite{simonyan2014very} and Resnet \cite{he2016deep}. Models with large sizes on one side have strong learning ability while from the other side have the unnecessary capability to memorize their training data \cite{carlini2021extracting}, which might be beneficial for SIAs. We leave the investigation of how model size influences the performance of SIAs for future work.

\noindent \textbf{Best inference round. \;} Our proposed SIAs enable the server to infer the source of a target record in every communication round. As we can see in Fig. \ref{fig::asr_comm}, the attack success rate of our proposed SIAs varies in each communication round on all the datasets. This can be because the local models overfit their local training datasets to different degrees in different communication rounds. There are two promising directions after this paper: i) It would be interesting if we can find the communication round that can achieve the best attack performance; ii) Since the server can save the updates of the clients in every communication round, it would be interesting to investigate the possibility of proposing a new attack approach that can leverage all the information collected during training to achieve better performance.

\section{Related work}\label{sec::06}

\subsection{Privacy Attacks in FL}
We summarize the existing privacy attacks and compare them with our proposed source inference attacks in FL in Table \ref{tab:attack_comparison}. For each of the existing attacks, we introduce them with more detailed descriptions as follows.
\begin{table*}[t]
\centering
\caption{A summary of privacy attacks in federated learning.}
\label{tab:attack_comparison}
\resizebox{\linewidth}{!}{

\begin{tabular}{@{}lccccl@{}}
\toprule
\multirow{2}{*}{\begin{tabular}[c]{@{}l@{}}\textbf{Privacy}  \textbf{Attacks}\end{tabular}} & \multicolumn{2}{c}{\textbf{Attacker}} & \multicolumn{2}{c}{\textbf{FL framework}} & \multicolumn{1}{c}{\multirow{2}{*}{\begin{tabular}[c]{@{}c@{}}\textbf{Attack Goal}\end{tabular}}} \\ 
\cmidrule(l){2-3} \cmidrule(l){4-5}  
 & \begin{tabular}[c]{@{}c@{}} Client \end{tabular} & \begin{tabular}[c]{@{}c@{}} Server \end{tabular} &  \begin{tabular}[c]{@{}c@{}} Horizontal FL \end{tabular} & 
 \begin{tabular}[c]{@{}c@{}} Vertical FL \end{tabular}
 &  \begin{tabular}[c]{@{}c@{}}\end{tabular}\\
\midrule
Data reconstruction attacks~\cite{boenisch2021curious} & \pie{0} & \pie{360} & \pie{360} & \pie{0} & Reconstruct training data of the client\\
Property inference attacks~\cite{melis2019exploiting} & \pie{360} & \pie{0} & \pie{360} & \pie{0}  & Infer property of other clients' training data \\
Feature inference attacks \cite{luo2021feature} & \pie{360} & \pie{0} & \pie{0} & \pie{360} & Infer features in other clients' training data \\
Preference profiling attacks \cite{zhou2022ppa} & \pie{0} & \pie{360} & \pie{360} & \pie{0} & Infer the training data distribution of the client \\
Membership inference attacks \cite{nasr2019comprehensive} & \pie{360} & \pie{360} & \pie{360} & \pie{0} & Infer the membership information of a data sample\\
\midrule
Source inference attacks. (\textbf{Ours})& \pie{0} & \pie{360} & \pie{360} & \pie{0}  & Infer the source information of a training sample \\ 

\bottomrule
\end{tabular}
}

\begin{tablenotes}
\item[] \pie{360}: applicable; \pie{0}: not applicable
\end{tablenotes}

\end{table*}

\noindent \textbf{Data reconstruction attacks. \;} This type of attack aims to reconstruct individual client's class-wise training data records that represent a whole class or instance-wise training data records. Class-wise data reconstruction attacks are usually achieved by GANs techniques \cite{goodfellow2014generative} that leverage the model updates as discriminators to generate generic representations of class-wise data records \cite{hitaj2017deep,wang2019beyond}. Instance-wise data reconstruction attacks leverage optimization techniques to iteratively optimize a dummy data record with a label so that the gradients on the dummy record are close to the gradients uploaded from a client \cite{zhu2019deep,zhao2020idlg}. The instance-wise data reconstruction attacks firstly were limited in a setting where the mini-batch is one \cite{zhu2019deep,zhao2020idlg}, and recent works \cite{yin2021see,boenisch2021curious} have relaxed this assumption and demonstrate the effectiveness in the setting of large mini-batch sizes.

\noindent \textbf{Property inference attacks. \;} This type of attack aims to infer the properties of clients' training data, {e.g.,} inferring when a particular person first appears in a client’s photos or when the client begins to visit a certain type of location~\cite{melis2019exploiting}. Property inference attacks are usually achieved by training a binary classifier that takes as input the parameters of the model and outputs whether the model's training data has the target property or not \cite{wang2022poisoning,ganju2018property}. 

\noindent {\textbf{Feature inference attacks. \;} This type of attack targets vertical FL (see Section \ref{sec::pre_fl} for a detailed introduction of vertical FL) and aim to infer the feature values of clients \cite{luo2021feature}. Feature inference attacks leverage the solving of mathematical equality for simple models such as logistic regression because the prediction output and the input containing the target features can be constructed as a set of equations. For complex models such as neural networks, a generative regression network is trained through an optimization process and is leveraged to compute the target features. Recently, the work \cite{luo2022feature} demonstrates that feature inference attacks can also reconstruct the private input on centralized trained models based their explanation values of predictions. The work \cite{luo2022feature} shows that an adversary having an auxiliary dataset and black-box access to the model can successfully attack popular Machine Learning as a Service platforms such as Google Cloud and IBM aix360.}

\noindent \textbf{Preference profiling attacks. \;} This type of attack aims to infer the data preference of clients, e.g., in the FL application of recommender systems, a malicious server infers which item a client likes or dislikes \cite{zhou2022ppa}. The attack intuition is the sensitivity of gradients during the FL training reflects the sample size of a class: a small gradient change indicates the sample size of a class is large. The statistical heterogeneity in FL amplifies the gradient sensitivity across classes and thus benefits the preference profiling attacks.

\noindent \textbf{Membership inference attacks. \;} {Membership inference attacks (MIAs), which are the most related inference attacks to the proposed SIAs, aim to identify the training data of a model \cite{hu2022membership}. MIAs are usually achieved by training a binary classifier \cite{shokri2017membership} or leveraging a threshold of a metric such as prediction confidence \cite{salem2019ml} and prediction entropy \cite{song2021systematic} to decide whether a data record is training data or not. Recent studies \cite{chobola2022membership,wu2022membership,conti2022label,hu2022m,wang2022debiasing,zhang2022label} have shown that many different types of model such as semantic segmentation models, text-to-image generation models, graph neural networks, multi-modal models, and recommender systems are vulnerable to MIAs. Currently, most studies of MIAs \cite{shokri2017membership,yeom2018privacy,salem2019ml,he2022semi,yuan2022membership,hu2022membershipbackdoor} are investigated under centralized settings where one dataset containing all training instances is used for training the models.}

{In the context of FL, MIAs were first investigated in FedSGD where a malicious client or server can infer whether or not a specific location profile was used for federated training based on the observation of the non-zero gradients of the embedding layer of the global model \cite{melis2019exploiting}. Then, MIAs are investigated in FedAvg where a malicious client can passively infer the membership privacy of the FL system or actively craft her updated model parameters to steal more membership privacy \cite{nasr2019comprehensive}. However, because the purpose of MIAs is to distinguish training data from testing data of the FL system, the existing research of MIAs ignores exploring the source client of the training data. In this paper, we propose SIAs to fill this gap and demonstrate SIAs are effective in different FL frameworks.}

\subsection{Privacy Defenses in FL}
\noindent \textbf{Cryptography-based defense.
\;} Cryptography-based defense leverages homomorphic encryption (HE) or secure multiparty computation (SMC) techniques to defend against privacy leakage in FL. In HE-based FL systems \cite{zhang2020batchcrypt,cheng2021secureboost,hardy2017private,liu2019secure,liu2020secure,nikolaenko2013privacy,zheng2022aggregation,jebreel2022enhanced}, each local client encrypts their gradients or model parameters using a public key and sends
the ciphertexts to a central server. In SMC-based FL systems \cite{bonawitz2017practical,mohassel2017secureml}, each client adds random values to their gradients or model parameters for masking their true updates. The cryptography-based FL systems prevent the server from knowing the clients' local updates. Under this type of defense, the existing privacy attacks like membership inference attacks \cite{nasr2019comprehensive}, property inference attacks \cite{melis2019exploiting}, data reconstruction attacks \cite{zhu2019deep,zhao2020idlg,boenisch2021curious}, preference profiling attack \cite{zhou2022ppa}, as well as our proposed SIAs cannot be immediately mounted. However, HE and SMC are computationally expensive that increase substantial additional communication and computation overheads in the FL system, which may prevent clients with limited computing resources and bandwidth from participating in FL. \cite{lyu2022privacy}.

\noindent \textbf{Differential privacy. \;}
{Compared to cryptography-based methods, DP is more computationally efficient, adding noise directly to the private data to trade off privacy and utility~\cite{lyu2022privacy}. Existing works of privacy-preserving FL mainly focus on either the centralized DP mechanism that requires a trusted central server~\cite{geyer2017differentially,mcmahan2018learning} to add noise to the local updates or a local differential privacy mechanism where each client perturbs its updates before sending it to an untrusted aggregator~\cite{truex2019hybrid,sun2020ldp}. These privacy-preserving methods ~\cite{geyer2017differentially,mcmahan2018learning,bonawitz2017practical,li2019fedmd,sun2020ldp,liu2020secure} have been evaluated and demonstrated to be effective for mitigating privacy attacks in FL. However, because of the perturbations caused by DP noise, the FL systems inevitably sacrifice the utilities of the models for mitigating privacy attacks \cite{wei2020federated,yuan2022membership}.}

{In this paper, we investigate whether the most widely used DP mechanism DP-SGD \cite{abadi2016deep} can help to mitigate SIAs in FL. The intuition is that the semi-honest server will be difficult to infer private information of a single instance from the differentially private local models because they do not contain details of the local datasets. However, as demonstrated in our experiments, applying vanilla DP in FL does not produce satisfactory results for mitigating SIAs, since it suffers from an unacceptable trade-off between model accuracy and defense effectiveness against SIAs. Recently, several DP mechanisms \cite{yuan2022membership,wei2021user,zheng2022balancing} have been designed particularly for FL training that can achieve better privacy-utility trade-offs than DP-SGD. For such advanced DP mechanisms in FL, it is worth investigating whether they can effectively mitigate our proposed SIAs while maintaining the high utility of the FL model.}

\section{Conclusion}\label{sec::07}
In this paper, we propose a new inference attack called SIAs in federated learning, which allows an honest-but-curious server to identify the source client of a training record. We adopt the Bayesian theorem to derive an inference method enabling the server to gain significant source information of the training data during each communication round. We propose \fedsgd, \fedavg,\;and \fedmd\;showing that in existing FL frameworks, the clients sharing gradients, model parameters, or the predictions on a public dataset will lead to source information leakage to the server. We evaluate SIAs on many datasets under different federated settings. The comprehensive experimental results validate the effectiveness of SIAs. We also conduct a detailed ablation study to investigate how the non-IID data distribution and local epochs impact the performance of SIAs. We identify that the overfitting of local models is the key factor contributing to the success of SIAs. We evaluate differential privacy as a defense mechanism to mitigate SIAs, but the experimental results suggest differential privacy is not a good solution because of the unacceptable trade-offs between model utilities and the defense effectiveness against SIAs. We discuss the limitations of our proposed SIAs and potential research opportunities. The investigation of finding appropriate regularization techniques as a
defense against SIAs to achieve satisfactory privacy-utility trade-offs becomes our future work.

\section*{Acknowledgement}
Dr. Xuyun Zhang is supported only by ARC DECRA Grant DE210101458. The work of K.-K. R. Choo was supported only by the Cloud Technology Endowed Professorship.

\ifCLASSOPTIONcaptionsoff
  \newpage
\fi

\bibliographystyle{IEEEtran}
\bibliography{reference_long.bib}

\begin{IEEEbiography}[{\includegraphics[width=1in,height=1.25in,clip,keepaspectratio]{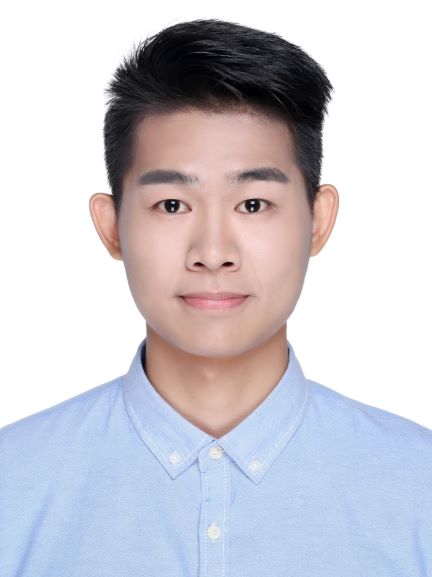}}]{Hongsheng Hu}
is currently a PhD at Faculty of Engineering, University of Auckland, New Zealand. His research focuses on AI privacy and security, especially membership inference attacks, differential privacy, and inference attacks in the context of federated learning. He has published 8 international refereed journal and conference papers, including ACM Computing Surveys, IJCAI, and ICDM.
\end{IEEEbiography}

\begin{IEEEbiography}[{\includegraphics[width=1in,height=1.25in,clip,keepaspectratio]{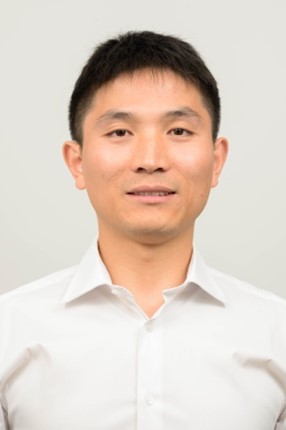}}]{Xuyun Zhang}
is currently working as a senior lecturer in School of Computing at Macquarie University (Sydney, Australia). Besides, he has the working experience in University of Auckland and NICTA (now Data61, CSIRO). He received his PhD degree in Computer and Information Science from University of Technology Sydney (UTS) in 2014, and his MEng and BSc degrees from Nanjing University. His research interests include scalable and secure machine learning, big data mining and analytics, big data privacy and cyber security, cloud/edge/service computing and IoT, etc. He is the recipient of 2021 ARC DECRA Award and several other prestigious awards, and has been listed as one of the Clarivate 2021 Highly Cited Researchers.
\end{IEEEbiography}

\begin{IEEEbiography}[{\includegraphics[width=1in,height=1.25in,clip,keepaspectratio]{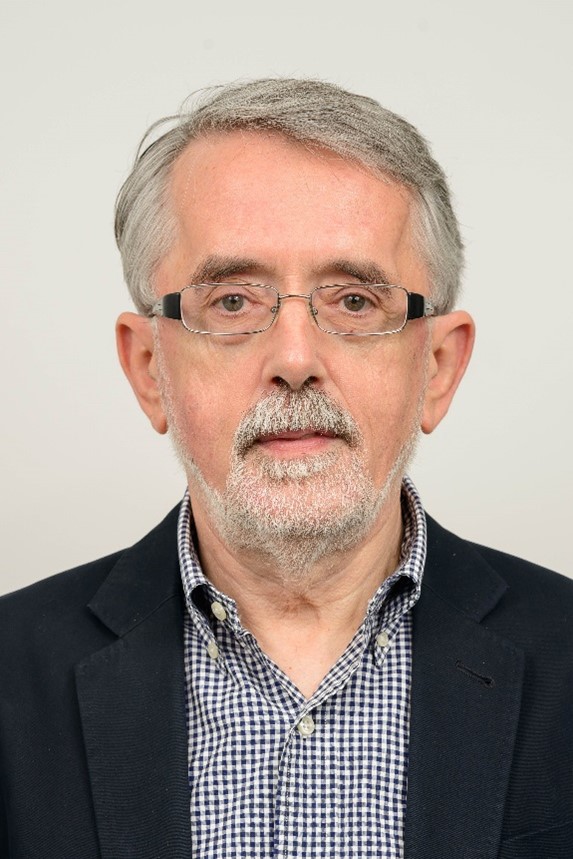}}]{Zoran Salcic}
(Life Senior Member, IEEE) received the B.E., M.E., and Ph.D. degrees in electrical and computer engineering from Sarajevo University in 1972, 1974, and 1976, respectively. He is a Professor and the Chair of computer systems engineering with University of Auckland, New Zealand. He has published more than 400 peer-reviewed journal and conference papers, and several books. His main research interests include various aspects of cyber-physical systems that include complex digital systems design, custom-computing machines, design automation tools, hardware–software co-design, formal models of computation, sensor networks and Internet of Things, and languages for concurrent and distributed systems and their applications such as industrial automation, intelligent buildings and environments, and collaborative systems with service robotics, IoT, big data processing and many more. He is a Fellow of the Royal Society of New Zealand. He was a recipient of the Alexander von Humboldt Research Award in 2010.
\end{IEEEbiography}

\begin{IEEEbiography}[{\includegraphics[width=1in,height=1.25in,clip,keepaspectratio]{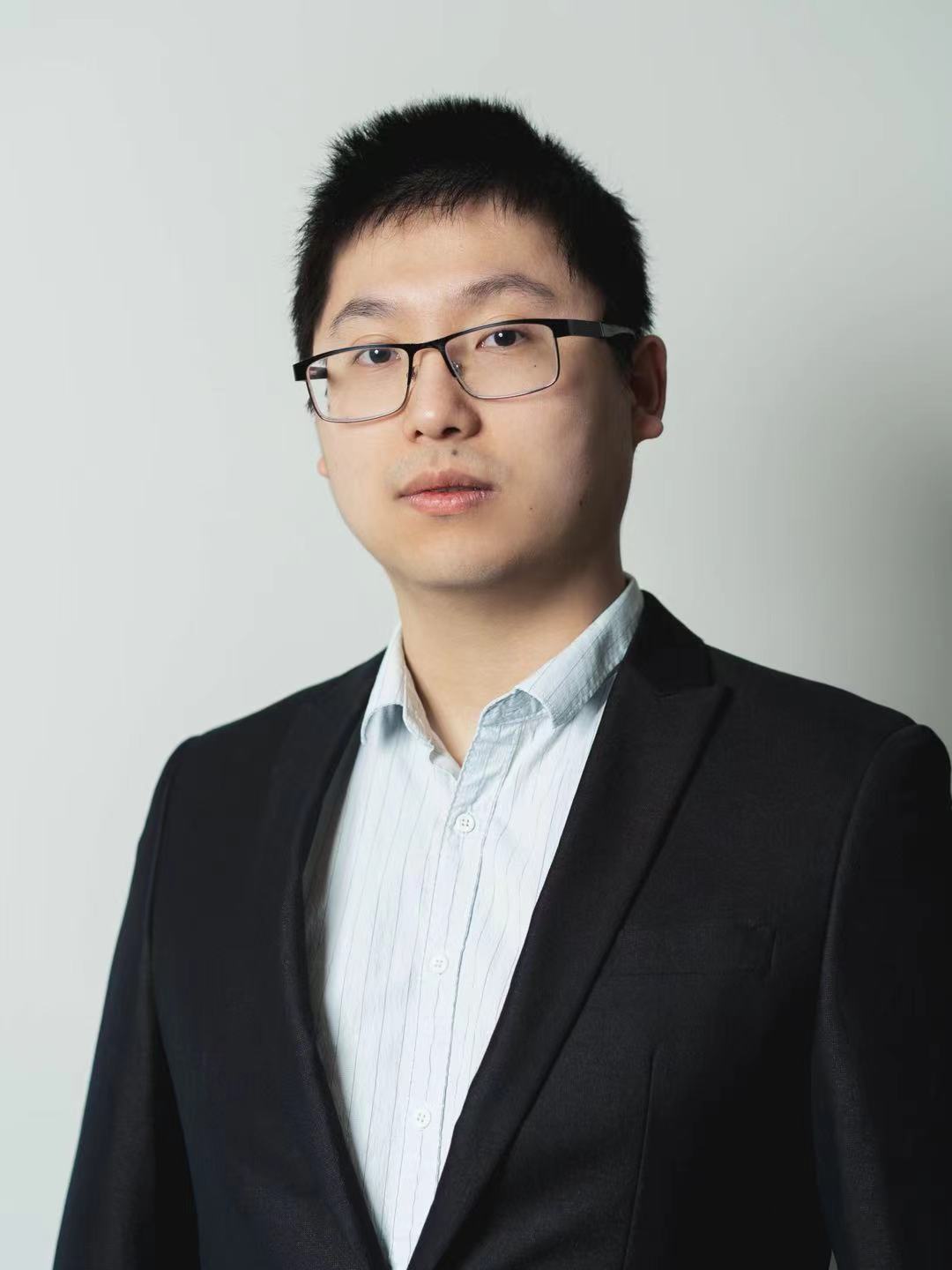}}]{Lichao Sun}
is currently an Assistant Professor in the Department of Computer Science and Engineering at Lehigh University. Before that, he received my Ph.D. degree in Computer Science at University of Illinois, Chicago in 2020, under the supervision of Prof. Philip S. Yu. Further before, he obtained M.S. and B.S. from University of Nebraska Lincoln. His research interests include security and privacy in deep learning and data mining. He mainly focuses on AI security and privacy, social networks, and natural language processing applications. He has published more than 45 research articles in top conferences and journals like CCS, USENIX-Security, NeurIPS, KDD, ICLR, AAAI, IJCAI, ACL, NAACL, TII, TNNLS, TMC.
\end{IEEEbiography}

\begin{IEEEbiography}[{\includegraphics[width=1.0in,height=1.25in,clip,keepaspectratio]{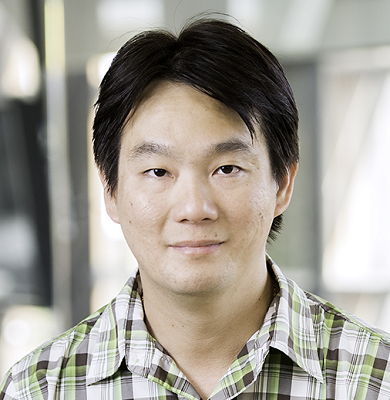}}]{Kim-Kwang Raymond Choo} (Senior Member, IEEE) received the Ph.D. degree in information security from the Queensland University of Technology, Australia, in 2006. He currently holds the Cloud Technology Endowed Professorship at The University of Texas at San Antonio. He was a recipient of the 2022 IEEE Hyper-Intelligence Technical Committee (HITC) Award for Excellence in Hyper-Intelligence (Technical Achievement), the 2022 IEEE Technical Committee on Homeland Security (TCHS) Research and Innovation Award in 2022, the 2022 IEEE Technical Committee on Secure and Dependable Measurement (TCSDM) Mid-Career Award, and the 2019 IEEE Technical Committee on Scalable Computing (TCSC) Award for Excellence in Scalable Computing (Middle Career Researcher).
\end{IEEEbiography}

\begin{IEEEbiography}[{\includegraphics[width=1in,height=1.25in,clip,keepaspectratio]{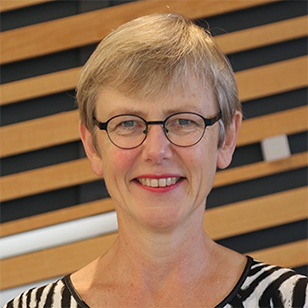}}]{Gillian Dobbie}
obtained a PhD in Computer Science from the University of Melbourne. She joined the Department of Computer Science at the University of Auckland in 2001, and is now a Professor. Gill’s primary research area covers  the modelling, management and efficient processing of data. She has worked on a diverse range of topics including formal methods, logical foundations of data models, modeling of semistructured data, data warehousing, data privacy, data mining, and recurrent pattern mining. Her most recent research projects have focused on adversarial attacks and defences. She has published over 160 international refereed journal and conference papers. 
\end{IEEEbiography}


\end{document}